\DeclareMathOperator{\cyc}{cyc}
\DeclareMathOperator{\aut}{aut}
\DeclareMathOperator{\Li}{Li}
\newtheorem{Theorem}{Theorem}[section]
 { \theoremstyle{definition}
\newtheorem{Remark}[Theorem]{Remark} }
\numberwithin{equation}{section}
\DeclarePairedDelimiter{\abs}{|}{|}
\def\ss{\subset}
\def\deq{\coloneqq}
\def\ss{\subset}
\def\&{&{\hskip -20pt}}
\def\JJ{\mathcal{J}}
\def\Cb{\mathbf{C}}
\def\Ib{\mathbf{I}}
\def\Nb{\mathbf{N}}
\def\Pb{\mathbf{P}}
\def\Rb{\mathbf{R}}
\def\Zb{\mathbf{Z}}
\begin{document}

\newcommand{\arXivNumber}{1504.07512}

\allowdisplaybreaks

\renewcommand{\thefootnote}{$\star$}

\renewcommand{\PaperNumber}{097}

\FirstPageHeading

\ShortArticleName{Multispecies Weighted Hurwitz Numbers}

\ArticleName{Multispecies Weighted Hurwitz Numbers\footnote{This paper is a~contribution to the Special Issue on Exact Solvability and Symmetry Avatars
in honour of Luc Vinet.
The full collection is available at
\href{http://www.emis.de/journals/SIGMA/ESSA2014.html}{http://www.emis.de/journals/SIGMA/ESSA2014.html}}}

\Author{J.~{HARNAD}~$^{\dag\ddag}$}

\AuthorNameForHeading{J.~Harnad}

\Address{$^\dag$~Centre de recherches math\'ematiques, Universit\'e de Montr\'eal,\\
\hphantom{$^\dag$}~C.P.~6128, succ.\ Centre-ville, Montr\'eal (QC) H3C 3J7, Canada}
\EmailD{\href{mailto:harnad@crm.umontreal.ca}{harnad@crm.umontreal.ca}}

\Address{$^\ddag$~Department of Mathematics and Statistics, Concordia University,\\
\hphantom{$^\ddag$}~7141 Sherbrooke~W., Montr\'eal (QC) H4B 1R6, Canada}

\ArticleDates{Received March 31, 2015, in f\/inal form November 16, 2015; Published online December 02, 2015}

\Abstract{The construction of hypergeometric $2D$ Toda $\tau$-functions as generating functions for weighted Hurwitz numbers is extended  to multispecies families.  Both the enumerative geometrical signif\/icance of multispecies weighted Hurwitz numbers, as weighted enumerations of  branched coverings of the Riemann sphere, and their combinatorial  signif\/icance in terms of weighted paths in the Cayley graph of~$S_n$ are derived. The particular case of multispecies quantum weighted Hurwitz numbers is studied in detail.}

\Keywords{weighted  Hurwitz number; $\tau$-function; multispecies}

\Classification{05A15; 14H30; 33C70; 57M12}

\renewcommand{\thefootnote}{\arabic{footnote}}
\setcounter{footnote}{0}

\section{Introduction}

  In \cite{GH1, GH2} a simple method was developed for constructing parametric families  of KP and
$2D$ Toda $\tau$-functions~\cite{Ta, Takeb, UTa} of  hypergeometric type~\cite{GR,OrSc} that  serve
as generating functions for the weighted enumeration of $n$-sheeted  branched coverings of the Riemann sphere.
These are characterized by the fact that their expansions in the basis of products of Schur functions have only diagonal  coef\/f\/icients, and these are of  special {\em content product} form. When expanded instead  in the  basis of products of power sum symmetric functions, the coef\/f\/icients turn out to be weighted  sums of Hurwitz numbers, with weighting dependent  generally  on an inf\/inite sequence of parameters ${\bf c}= (c_1, c_2, \dots)$  determined by an associated {\em weight generating function}~$G(z)$. Such weighted sums may be interpreted equivalently  as weighted enumeration  of paths in the Cayley graph of the symmetric group $S_n$  generated by transpositions.

The special choice $G(z)=\exp(z)$ gives rise to the generating functions for simple and double Hurwitz numbers
introduced  by  Pandharipande~\cite{Pa} and Okounkov~\cite{Ok}, in which all branch points other than  a single one  or a specif\/ied pair
have simple branching prof\/iles. Further insight into the signif\/icance of these hypergeometric $\tau$-functions as generating
functions in terms of recurrence relations for  triangulations was developed in~\cite{GJ}. More generally, the ``topological recursion'' program
was successfully applied to the study of Hurwitz numbers in \cite{BM,EO3,EO1,EO2}.

It has been known since the pioneering works of Hurwitz~\cite{Frob1, Frob2}, Frobenius~\cite{Hu1, Hu2} and Schur~\cite{Sch} that these numbers  may be reinterpreted combinatorially using the monodromy representation of the fundamental group of the punctured sphere with  values in $S_n$ determined by lifting closed paths to the covering surface. From this viewpoint, they enumerate factorizations of the identity element into products of elements whose
conjugacy classes correspond to the ramif\/ications prof\/iles, and hence give uniformly weighted enumeration of paths in the Cayley graph from one conjugacy class to another with a given number of steps.

Another choice of weight generating function that was studied in~\cite{GH2}  is the quantum di\-logarithm function~\cite{FK}. This amounts to equating the parameters~$c_i$ to powers of  a single quantum deformation parameter~$q$, and gives rise to four special versions  of~quantum weighted Hurwitz numbers, whose distribution functions were linked to those of a Bosonic gas with  linear energy spectrum in~\cite{GH2}. Using a suitably extended class of weight generating functions, this was further generalized in~\cite{H1, H2}  to include both the inf\/inite family of  classical weighting parameters~${\bf c}$ and the pair~$(q,t)$ of quantum deformation parameters characterizing  Macdonald polynomials.

In the present work, the notion of weighted Hurwitz numbers is extended to weighted enu\-me\-ra\-tions
of multispecies coverings  involving arbitrary choices for the corresponding weighting para\-meters.
The generating function depends on $l+m$ expansion parameters $(w_1, \dots, w_l; z_1, \dots , z_m)$,
corresponding to two classes and $l+m$ subspecies (or ``colours'') of branch points
with rami\-f\/i\-ca\-tion prof\/iles types $\big\{\mu^{(\beta)}\big\}_{\beta =1 , \dots , l}$ and  $\big\{\nu^{(\beta)}\big\}_{\beta =1 , \dots , m}$.
  The special case of signed multispecies enumeration in the uniformly weighted case
 was  studied in~\cite{HOr}. Its combinatorial  signif\/icance was explained  in terms of  enumeration of
 paths in the Cayley graph that are subdivided into strictly or weakly  monotonically increasing subsequences of  transpositions having given lengths.  In the single species case~\cite{GH2}, this was shown to be equivalent  to (signed) enumeration of branched covers of the Riemann sphere with  the
 ``coloured'' branch points constrained to have f\/ixed total ramif\/ication index within each  class.
  Another special case detailed here consists of ``multispecies quantum weighted
 Hurwitz numbers'',  in which the weighting parameters consist of  powers of a~sequence of
 auxiliary  quantum deformation parameters ${\bf q} = (q_1, \dots, q_l)$, ${\bf p} = (p_1, \dots, p_m)$.

 In  Sections~\ref{hurwitz_numbers}--\ref{hypergeometric_tau_generators} the basic notions regarding Hurwitz numbers will be  recalled, together with the construction of weighted Hurwitz numbers  using inf\/inite parameter families of weight generating  functions~$G(z)$, as developed in~\cite{GH1, GH2, HOr}.  In Section~\ref{generating_multispecies_weighted_hurwitz}, the single species case will be extended to
 multispecies by introducing the idea of ``coloured'' branch points, of two classes.
Weight generating functions depending on a multiparametric set of expansion parameters multiplicatively
provide multiparametric families of  $2D$ Toda $\tau$-functions of hypergeometric type that are generating functions for  such multispecies weighted Hurwitz numbers. As in the single  species case, these may be viewed both geometrically and combinatorially, in terms of weighted coverings and paths in the Cayley graph.  In Section~\ref{multispecies_quantum_hurwitz}  this is restricted to the special cases of quantum weightings introduced in \cite{GH2} to produce generating functions for multispecies quantum Hurwitz numbers. These are interpreted, as in the single  species case, both geometrically and
 combinatorially\footnote{A dif\/ferent multiparametric family of generating functions for weighted Hurwitz numbers over $\Rb\Pb^2$, consisting  of BKP generating functions of  hypergeometric type, was considered in~\cite{NOr1, NOr2}.}.

\subsection{Hurwitz numbers}
\label{hurwitz_numbers}

 For a set of $k\in \Nb^+$ partitons $\big(\mu^{(1)}, \dots, \mu^{(k)}\big)$ of $n\in\Nb^+$, let  $H\big(\mu^{(1)}, \dots, \mu^{(k)}\big)$
 denote the number of  $n$-sheeted branched coverings  of the Riemann sphere (not necessarily connected), with~$k$ branch points,  whose ramif\/ication prof\/iles are given by the  partitions, weighted  by the inverse of the order of   the automorphism group. These are  the geometrically def\/ined  {\em  Hurwitz numbers}  as originally studied by Hurwitz \cite{Hu1, Hu2, LZ}. The genus~$g$ of the covering surface is determined by the Riemann Hurwitz formula for the Euler characteristic
 \begin{gather}
 2- 2g = 2n - \sum_{i=1}^k \ell^*\big(\mu^{(i)}\big),
 \end{gather}
 where
 \begin{gather}
 \ell^*(\mu) := |\mu| -\ell(\mu)
 \end{gather}
 is the {\em colength} of the partition~$\mu$, i.e.,   the complement of the length~$\ell(\mu)$  with respect to its weight~$|\mu|$,
 or  the degree of degeneracy of the branched cover over a~point with ramif\/ication prof\/ile type~$\mu$.

The Frobenius--Schur formula \cite{Frob1, Frob2, LZ, Sch} expresses these as sums over  irreducible
 $S_n $ cha\-rac\-ters
  \begin{gather}
  H\big(\mu^{(1)}, \dots, \mu^{(k)}\big) =
   \sum_{\lambda, |\lambda|=n} h_\lambda^{k-2} \prod_{i=1}^k {\chi_\lambda(\mu^{(i)}) \over z_{\mu^{(i)}}},
   \label{Frobenius_Schur_Hurwitz}
  \end{gather}
where $\chi_\lambda(\mu)$ is the character of the  irreducible representation of symmetry  type $\lambda$, evaluated on the
conjugacy class $\cyc(\mu)$ consisting of elements with cycle lengths equal to the parts of the partition~$\mu$,
\begin{gather}
h_\lambda = \det\left({1\over{( \lambda_i -i +j)!}}\right)^{-1}
\end{gather}
is the product  of the hook lengths in the Young diagram of  partition $\lambda$ and
\begin{gather}
z_\mu = \prod_{i\in \Nb} i^{m_i(\mu)} (m_i(\mu))!
\end{gather}
is the order of the stabilizer under conjugation of any element of the conjugacy class~$\cyc(\mu)$,
with~$m_i(\mu)$   the number of parts of the  partition~$\mu$ equal to~$i$,

There is an alternative interpretation of $H\big(\mu^{(1)}, \dots, \mu^{(k)}\big)$ that is purely combinatorial;
it equals~${1\over n!}$ times the number of ways in which the identity element $\Ib \in S_n$ may
be expressed as a~pro\-duct of $k$ elements belonging to the conjugacy classes of
cycle type  $\big\{\cyc\big(\mu^{(i)}\big)\big\}_{i=1, \dots , k}$
\begin{gather}
\Ib = g_1 g_2 \cdots g_k, \qquad \text{where} \quad g_i\in \cyc\big(\mu^{(i)}\big).
\end{gather}
The two  are related by noting that each such factorization may be understood as def\/ining the
image in~$S_n$ of the identity element
in the fundamental group of the punctured sphere with the branch points removed under the monodromy map
obtained by lifting closed loops to the covering surface.

\subsection{Weighted geometrical Hurwitz numbers}
\label{weighted_hurwitz_geometrical}

As def\/ined in \cite{GH2}, given a weight generating function  $G(z)$ expressible as an inf\/inite product
\begin{gather}
G(z) = \prod_{i=1}^\infty (1+ c_i z), \qquad {\bf c} =(c_1, c_2, \dots ),
\end{gather}
the weight $W_G\big(\mu^{(1)}, \dots, \mu^{(k)}\big)$ assigned to a conf\/iguration of $k+2$ branch points
 with ramif\/ication prof\/iles $\big\{\mu^{(1)}, \dots, \mu^{(k)}, \mu, \nu\big\}$ is solely determined by the colengths
 $\big\{\ell^*\big(\mu^{(1)}\big),  \dots, \ell^*\big(\mu^{(k)}\big)\big\}$, and is given by evaluation of the monomial
 sum symmetric functions at the parameter values
 \begin{gather}
W_G\big(\mu^{(1)}, \dots, \mu^{(k)}\big) := m_\lambda ({\bf c}) =
\frac{1}{\abs{\aut(\lambda)}} \sum_{\sigma\in S_k} \sum_{1 \le i_1 < \cdots < i_k}
 c_{i_{\sigma(1)}}^{\ell^*(\mu^{(1)})} \cdots c_{i_{\sigma(k)}}^{\ell^*(\mu^{(k)})}.
\label{WG_weight}
\end{gather}
Here $\lambda$ is the partition of  length  $\ell(\lambda)=k$ whose parts are $\big\{\ell^*\big(\mu^{(i)}\big)\big\}_{i=1, \dots, k}$,
and $\abs{\aut(\lambda)} $ is the order of the automorphism group of $\lambda${\samepage
\begin{gather}
|\aut(\lambda)| := \prod_{i=1}^{\ell(\lambda} (m_i(\lambda))!,
\end{gather}
where $m_i(\lambda)$ is the number of parts of $\lambda$ equal to~$i$.}

The geometrically def\/ined double  weighted   Hurwitz numbers $H^d_G(\mu, \nu)$
give a  weighted enumeration of $n$-sheeted branched covers of the Riemann sphere that contain
a pair of f\/ixed branch points, say  at $(0, \infty)$, with ramif\/ication prof\/ile types given by the pair of partitions~$(\mu, \nu)$
and a further set of $k$ branch points with ramif\/ication prof\/iles $ \big(\mu^{(1)}, \dots, \mu^{(k)}\big)$. They are def\/ined
by the weighted sums
\begin{gather}
H^d_G(\mu, \nu) \deq \sum_{k=0}^\infty \sideset{}{'}\sum_{\substack{\mu^{(1)}, \dots, \mu^{(k)} \\ \sum\limits_{i=1}^k \ell^*(\mu^{(i)})= d}}
W_G\big(\mu^{(1)}, \dots, \mu^{(k)}\big) H\big(\mu^{(1)}, \dots, \mu^{(k)}, \mu, \nu\big),
\end{gather}
over all $k$-tuples of nontrivial ramif\/ication prof\/iles satisfying the condition
\begin{gather}
d = \sum_{i=1}^k \ell^*\big(\mu^{(i)}\big) =|\lambda|,
\end{gather}
with weight $W_G\big(\mu^{(1)}, \dots, \mu^{(k)}\big) $  given by~(\ref{WG_weight}).

The Riemann--Hurwitz formula for the genus $g$ of the covering surface is then
\begin{gather}
2-2g = \ell(\mu) + \ell(\nu) -d.
\end{gather}

 An alternative is to use the dual weight generating function
 \begin{gather}
 \tilde{G}(z)= \prod_{i=1}^\infty (1- \tilde{c}_i z)^{-1},  \qquad \tilde{\bf c} =(\tilde{c}_1, \tilde{c}_2, \dots ),
 \end{gather}
 for which the geometrical weight $W_{\tilde{G}}\big(\mu^{(1)}, \dots, \mu^{(k)}\big)$
 is given by the ``forgotten'' symmetric function $f_\lambda(\tilde{\bf c})$, $\tilde{\bf c}=(\tilde{c}_1, \tilde{c}_2, \dots )$,
\begin{gather}
W_{\tilde{G}}\big(\mu^{(1)}, \dots, \mu^{(k)}\big)  :=
f_\lambda (\tilde {\bf c})=
\frac{(-1)^{\ell^*(\lambda)}}{\abs{\aut(\lambda)}} \sum_{\sigma\in S_k} \sum_{1 \le i_1 \le \cdots \le i_k}
\tilde c_{i_{\sigma(1)}}^{\ell^*(\mu^{(1)})}  \cdots \tilde c_{i_{\sigma(k)}}^{\ell^*(\mu^{(k)})},
   \label{dual_WG_weight}
 \end{gather}
 where the partition $\lambda$ is again def\/ined as above,  with parts consisting of the colengths\linebreak
 $\{\ell^*(\mu^{(i)})\}_{i=1, \dots, k}$.  The dually weighted geometrical Hurwitz numbers are similarly def\/ined by the weighted sum
 \begin{gather}
H^d_{\tilde{G}}(\mu, \nu) \deq \sum_{k=0}^\infty \sideset{}{'}\sum_{\substack{\mu^{(1)}, \dots, \mu^{(k)} \\ \sum\limits_{i=1}^k \ell^*(\mu^{(i)})= d}}
W_{\tilde{G}}\big(\mu^{(1)}, \dots, \mu^{(k)}\big) H\big(\mu^{(1)}, \dots, \mu^{(k)}, \mu, \nu\big).
\end{gather}

\subsection{Weighted combinatorial Hurwitz numbers}
\label{weighted_hurwitz_combinatorial}

Following \cite{GH1, GH2}, we may alternatively def\/ine a combinatorial Hurwitz number $F^d_G(\mu, \nu)$
that gives the weighted enumeration of $d$-step paths in the Cayley graph of the symmetric
group $S_n$ generated by transpositions $(a,b)$, $b>a$, starting at an element $h\in \cyc(\mu)$ in the
conjugacy class $\cyc(\mu)$ consisting of elements with  cycle lengths equal to the parts of
$\mu$ and ending in the conjugacy class $\cyc(\nu)$
\begin{gather}
(a_d b_d) \cdots (a_1 b_1)h \in \cyc(\nu).
\end{gather}
Every such path has a {\em signature} $\lambda$, which is def\/ined to be the partition of weight~$d$,
whose parts are, in weakly decreasing order, the number of times any given second element~$b_i$,
$i=1, \dots , \ell(\lambda)$ is repeated. In the case of the weight generating function~$G(z)$, we assign to
any  path with signature $\lambda$ a combinatorial weight  equal to the product~$e_\lambda({\bf c})$
of the elementary  symmetric functions~\cite{Mac}, evaluated at the parameters $(c_a, c_2, \dots)$
\begin{gather}
e_\lambda({\bf c}) = \prod_{i=1}^{\ell(\lambda)} e_{\lambda_i}({\bf c}).
\end{gather}
In the case of the dual generating functions $\tilde{G}(z)$, we assign a combinatorial weight
equal to the product  $h_\lambda({\bf c})$ of the complete symmetric functions~\cite{Mac},
evaluated at the parameters $(\tilde{c}_1, \tilde{c}_2, \dots)$
\begin{gather}
h_\lambda(\tilde{\bf c}) = \prod_{i=1}^{\ell(\lambda)} h_{\lambda_i}(\tilde{\bf c}) .
\end{gather}

Let $m^\lambda_{\mu \nu}$ be the number of $d=|\lambda|$ step paths of signature $\lambda$
starting at $h \in \cyc(\mu)$ and ending in the conjugacy class $\cyc(\nu)$.
Then the combinatorial weighted Hurwitz numbers~$F^d_G(\mu, \nu)$, $F^d_{\tilde{G}}(\mu, \nu)$ are
 def\/ined to be the weighted sums
\begin{gather}
F^d_G(\mu, \nu)  := {1\over n!} \sum_\lambda e_\lambda({\bf c}) m^\lambda_{\mu \nu},
\label{Fd_G_mu_nu}
 \\
F^d_{\tilde{G}}(\mu, \nu)   := {1\over n!} \sum_\lambda h_\lambda(\tilde{\bf c}) m^\lambda_{\mu \nu}.
\label{Fd_tilde_G_mu_nu}
\end{gather}
In \cite{GH2}  it is proved that these two notions of weighted Hurwitz numbers in fact coincide:

\begin{Theorem}[\cite{GH2}]
\begin{gather}
F^d_G(\mu, \nu) = H^d_G(\mu, \nu), \qquad F^d_{\tilde{G}}(\mu, \nu) = H^d_{\tilde{G}}(\mu, \nu).
\label{Fd_G_mu_nu_Hd}
\end{gather}
\end{Theorem}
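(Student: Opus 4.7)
The plan is to prove both identities by showing that each side admits the common character-theoretic expansion
\begin{gather*}
H^d_G(\mu,\nu) = F^d_G(\mu,\nu) = \frac{1}{z_\mu z_\nu}\sum_{\lambda\vdash n}\bigl[\beta^d\bigr] r_\lambda^G(\beta)\,\chi_\lambda(\mu)\chi_\lambda(\nu),\qquad r_\lambda^G(\beta) \deq \prod_{(i,j)\in\lambda}G\bigl(\beta(j-i)\bigr),
\end{gather*}
with the analogous identification in the dual case. Both reductions pivot on the Jucys--Murphy generating element
\begin{gather*}
\Phi_G(\beta) \deq \prod_{a=1}^n G(\beta X_a) \in Z(\Cbb[S_n]), \qquad X_a \deq \sum_{i<a}(i,a),
\end{gather*}
which acts on the irreducible module $V_\lambda$ by the content-product scalar $r_\lambda^G(\beta)$, since each $X_a$ acts on the Gelfand--Tsetlin vector $e_T$ by the cell content $c(T(a))$ and $T$ bijects $\{1,\ldots,n\}$ with the cells of $\lambda$. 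The standard formula for a central element in terms of its eigenvalues then gives the class-sum decomposition $\Phi_G(\beta) = \sum_\mu\bigl(\sum_\lambda r_\lambda^G(\beta)\chi_\lambda(\mu)/h_\lambda\bigr)C_\mu$.

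For the geometric side, I would substitute the Frobenius--Schur formula~\eqref{Frobenius_Schur_Hurwitz} for $H\bigl(\mu^{(1)},\ldots,\mu^{(k)},\mu,\nu\bigr)$ into the definition of $H^d_G(\mu,\nu)$ and interchange the order of summation to pull the $\lambda$-sum outside. With $f_\mu(\lambda) \deq h_\lambda\chi_\lambda(\mu)/z_\mu$ the normalized character, this reduces the geometric side to proving the content-product identity
$\sum_{k\geq 0,(\mu^{(i)})}W_G\bigl(\mu^{(1)},\ldots,\mu^{(k)}\bigr)\beta^{\sum\ell^*(\mu^{(i)})}\prod_i f_{\mu^{(i)}}(\lambda) = r_\lambda^G(\beta)$,
which I would establish by expanding both sides as formal power series in $\beta$ and matching coefficients: on the right one uses $G(z) = \sum_n e_n(\mathbf c)z^n$ together with the grouping of content cells by the multiset $\lambda_\bullet$ of exponents, producing the monomial symmetric function $m_{\lambda_\bullet}(\mathbf c) = W_G$; on the left one uses Frobenius's identity $s_\lambda = \sum_\mu z_\mu^{-1}\chi_\lambda(\mu)p_\mu$ to convert the character-weighted sum over $\mu^{(i)}$ of fixed colength $j$ into the $j$-th power sum $p_j$ of the cell contents of $\lambda$, at which point the two sides agree by the classical pairing between the monomial and power-sum bases.

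For the combinatorial side, the path count $m^\lambda_{\mu\nu}$ identifies $\sum_\lambda e_\lambda(\mathbf c)m^\lambda_{\mu\nu}$ as $|C_\mu|^{-1}$ times the $C_\nu$-coefficient of the degree-$d$ part of $\Phi_G(\beta)$, obtained by expanding
\begin{gather*}
\Phi_G(\beta) = \prod_a\prod_k(1+\beta c_k X_a) = \sum_{(\lambda_a)}\beta^{\sum\lambda_a}\prod_a e_{\lambda_a}(\mathbf c)X_a^{\lambda_a}
\end{gather*}
and collecting each monomial $\prod_a X_a^{\lambda_a}$, taken in the canonical order of increasing $a$, as a sum of weakly $b$-monotone transposition products realizing signature $\lambda_\bullet = \mathrm{sort}(\lambda_a)$. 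Reading off the $C_\nu$-coefficient against the class-sum decomposition from the first paragraph then recovers the same character expansion for $F^d_G(\mu,\nu)$, yielding $F^d_G = H^d_G$. The dual identity $F^d_{\tilde G} = H^d_{\tilde G}$ follows mutatis mutandis via the expansion $\tilde G(z) = \sum_j h_j(\tilde{\mathbf c})z^j$ and the forgotten-symmetric-function weight $f_\lambda(\tilde{\mathbf c})$, interchanging $e_\lambda \leftrightarrow h_\lambda$ throughout.

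The principal technical obstacle will be the combinatorial identification in the third paragraph: one must verify that the Jucys--Murphy expansion, which naturally enumerates weakly $b$-monotone transposition products with multiplicity vector $(\lambda_a)$, recovers each signature-$\lambda$ path exactly once so that the weighting $e_\lambda(\mathbf c)$ (respectively $h_\lambda(\tilde{\mathbf c})$) compensates precisely for the strict (respectively weak) ``color'' ordering built into the elementary (respectively complete) symmetric functions. This is the bookkeeping content of the classical fact that $e_j$ and $h_j$ at Jucys--Murphy elements are central with content-symmetric-function eigenvalues on $V_\lambda$, whose signature-refined version is the key combinatorial input from~\cite{GH2}.
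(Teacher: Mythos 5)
Your overall strategy is the same circle of ideas as the paper's: introduce the central element $\prod_{a=1}^n G(\beta\JJ_a)\in\Zb(\Cb[S_n])$, use its content-product eigenvalues on the irreducible blocks, and compare the geometric and combinatorial sides through a common expansion. However, the key computational step you give for the geometric side fails as stated. You claim that the colength-graded character sum $\sum_{\mu,\,\ell^*(\mu)=j}h_\lambda\chi_\lambda(\mu)/z_\mu$ equals the power sum $p_j$ of the contents of $\lambda$, and that the matching then follows from "the classical pairing between the monomial and power-sum bases". Neither is correct: by Jucys' identity $e_j(\JJ)=\sum_{\mu,\,\ell^*(\mu)=j}C_\mu$ (quoted in the paper), that character sum is the \emph{elementary} symmetric function $e_j$ of the contents, not $p_j$. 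Concretely, for $\lambda=(3)$ and $j=2$ the sum is $h_{(3)}\chi_{(3)}\big((3)\big)/z_{(3)}=6/3=2=e_2(0,1,2)$, whereas $p_2(0,1,2)=5$. Moreover $m_\rho$ and $p_\rho$ are not dual bases; the pairing you actually need is the dual Cauchy kernel $\prod_{i,k}(1+c_k y_i)=\sum_\rho m_\rho({\bf c})\,e_\rho(y)=\sum_\rho e_\rho({\bf c})\,m_\rho(y)$ evaluated at $y=\beta\cdot(\hbox{contents of }\lambda)$ (and the Cauchy kernel, pairing $h_\rho$ with $m_\rho$ and the forgotten functions $f_\rho$ with $e_\rho$, in the dual case). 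With that correction the identity $[\beta^d]\,r_\lambda^{G}(\beta)=\sum_{\rho\vdash d}m_\rho({\bf c})\,e_\rho(\hbox{contents})$ does hold, the automorphism factors in $W_G$ account for passing from ordered tuples $\big(\mu^{(1)},\dots,\mu^{(k)}\big)$ to the partition $\rho$ of colengths, and your reduction of $H^d_G$ goes through.

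Once repaired in this way, your argument is essentially the paper's proof transported to the character picture: the two dual expansions $G_n(z,\JJ)=\sum_\rho e_\rho({\bf c})\,m_\rho(\JJ)=\sum_\rho m_\rho({\bf c})\,e_\rho(\JJ)$, applied to the cycle sum $C_\mu$, give the combinatorial side via $m_\rho(\JJ)\,C_\mu=\sum_\nu m^\rho_{\mu\nu}C_\nu$ and the geometric side via $e_j(\JJ)=\sum_{\ell^*(\mu)=j}C_\mu$, so the character expansion is not even needed as an intermediary. Two further slips to fix in your combinatorial paragraph: the quantity $\sum_\lambda e_\lambda({\bf c})\,m^\lambda_{\mu\nu}$ is (up to the $1/n!$ normalization in the definition of $F^d_G$) the $C_\nu$-coefficient of the degree-$d$ part of $\Phi_G(\beta)\,C_\mu$, not of $\Phi_G(\beta)$ alone (the latter does not depend on $\mu$); and you should check the normalization of the idempotent expansion ($F_\lambda=h_\lambda^{-1}\sum_\mu\chi_\lambda(\mu)C_\mu$) when reading off class-sum coefficients, since constants of the form $h_\lambda$, $z_\mu$, $n!$ must cancel exactly to reproduce the stated equalities.
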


The main idea behind the proof is  to def\/ine associated elements  $G_n(w, \JJ) \in \Zb(\Cb[S_n])$  and $\tilde{G}_n(z, \JJ)\in \Zb(\Cb[S_n])$
in the centre  $\Zb(\Cb[S_n])$ of the group algebra $\Cb[S_n]$ by
 \begin{gather}
G_n(z, \JJ) :=\prod_{a=1}^n G(z\JJ_a) ,\qquad \tilde{G}(z, \JJ) :=\prod_{a=1}^n \tilde{G}(z\JJ_a),
\end{gather}
where $\JJ:= (\JJ_1, \dots, \JJ_n)$ are the Jucys--Murphy elements \cite{DG, Ju, Mu}
\begin{gather}
\JJ_1:= 0,  \qquad \JJ_b :=\sum_{a=1}^{b-1} (a b), \qquad b=1, \dots , n,
\label{jucys_murphy}
\end{gather}
which generate an abelian subalgebra of the centre $\Zb(\Cb[S_n])$ of the group algebra $\Cb[S_n]$.

The elements $G_n(z, \JJ) $ and  $ \tilde{G}_n(z, \JJ)$  def\/ine  endomorphisms of $\Zb(\Cb[S_n])$ under multiplication,
which are diagonal in the basis  $\{F_\lambda\}$ of $\Zb(\Cb[S_n])$ consisting of the orthogonal idempotents,
 corresponding to irreducible representations, labelled by partitions~$\lambda$ of~$n$
\begin{gather}
G_n(z, \JJ) F_\lambda =r_\lambda^{G(z)}F_\lambda, \qquad \tilde{G}_n(z, \JJ) F_\lambda =r_\lambda^{\tilde{G}(z)}F_\lambda,
 \end{gather}
 with eigenvalues of the following {\em content product} form
 \begin{gather}
r_\lambda^{G(z)}(N) \deq \prod_{(i,j)\in \lambda} G(z(N+ j-i)), \qquad r_\lambda^{\tilde{G}(z)}(N) \deq \prod_{(i,j)\in \lambda} \tilde{G}(z(N+ j-i)).
\end{gather}

 On the other hand, the Cauchy--Littlewood generating function relation \cite{Mac} and its dual show that $G_n(z, \JJ) $ and  $ \tilde{G}_n(z, \JJ)$ may be expanded in terms  of dual bases of the algebra of symmetric functions, evaluated either on the parameters ${\bf c}$ or on the Jucys--Murphy elements $\JJ = (\JJ_1, \dots, \JJ_n)$
 \begin{gather}
 G_n(z, \JJ) = \sum_{\lambda, \, \abs{\lambda}=n} e_\lambda({\bf c}) m_\lambda(\JJ) = \sum_{\lambda,  \, \abs{\lambda}=n} m_\lambda({\bf c}) e_\lambda(\JJ),
 \label{Gn_z_JJ}
 \\
  \tilde{G}_n(z, \JJ) = \sum_{\lambda, \,  \abs{\lambda}=n} h_\lambda({\bf c}) m_\lambda(\JJ) = \sum_{\lambda, \,  \abs{\lambda}=n} f_\lambda({\bf c})
  e_\lambda(\JJ) ,
   \label{G_tilde_n_z_JJ}
 \end{gather}
where $e_\lambda$, $h_\lambda$, $m_\lambda$ and $f_\lambda$ are the elementary, complete, monomial and ``forgotten'' symmetric functions~\cite{Mac},
respectively.  Applying (\ref{Gn_z_JJ}) and (\ref{G_tilde_n_z_JJ})  to the  basis for~$\Zb(\Cb[S_n])$  consisting of the cycle sums
\begin{gather}
C_\mu := \sum_{h\in \cyc(\mu)} h,
\end{gather}
and using the identities
\begin{gather}
e_j(\JJ) = \sum_{\mu, \,  \ell^*(\mu)=j} C_\mu
\qquad \text{and} \qquad
m_\lambda(\JJ) C_\mu = \sum_{\nu, \, \abs{\nu} = \abs{\mu}} m^\lambda_{\mu \nu}C_\nu
\end{gather}
leads to~(\ref{Fd_G_mu_nu_Hd}).

 The bases $\{F_\lambda\}_{\abs{\lambda}=n}$ and  $\{C_\mu\}_{\abs{\mu}=n}$ are related by
\begin{gather}
F_\lambda = h_\lambda \sum_{\mu, \, |\mu| = |\lambda|} \chi_\lambda(\mu) C_\mu,
\end{gather}
where $\chi_{\lambda}(\mu)$ denotes the irreducible character of the irreducible
representation of type  $\lambda$ eva\-luated on the conjugacy class $\cyc(\mu)$.
Under the characteristic map, this   is equivalent to the Frobenius character formula~\cite{Mac}
\begin{gather}
s_\lambda = \sum_{\mu, \, |\mu| = |\lambda|} z_\mu^{-1}\chi_\lambda(\mu)  p_\mu .
\label{Frobenius_character}
\end{gather}

\subsection[Hypergeometric $2D$ Toda $\tau$-functions as generating functions]{Hypergeometric $\boldsymbol{2D}$ Toda $\boldsymbol{\tau}$-functions as generating functions}
\label{hypergeometric_tau_generators}

As shown in \cite{GH2}, for any given generating function of type $G(z)$ or $\tilde{G}(z)$, there is a naturally
associated $2D$ Toda $\tau$-function of hypergeometric type, expressible as a diagonal double Schur function expansion
\begin{gather}
\tau^{G(z)}(N, {\bf t}, {\bf s})  := \sum_{\lambda} r_\lambda^{G(z)}(N) s_\lambda({\bf t}) s_\lambda({\bf s}),
\label{tau_G_double_schur}
\\
\tau^{\tilde{G}(z)}(N, {\bf t}, {\bf s})  := \sum_{\lambda} r_\lambda^{\tilde{G}(z)}(N) s_\lambda({\bf t}) s_\lambda({\bf s}),
\label{tau_G_tilde_double_schur}
\end{gather}
where
\begin{gather}
{\bf t} = (t_1, t_2, \dots), \qquad {\bf s} = (s_1, s_2, \dots)
\end{gather}
are the $2D$ Toda f\/low variables, which may be identif\/ied in this notation
with the power sums
\begin{gather}
t_i = \frac{p_i}{i}, \qquad s_i = \frac{p'_i}{i},  \qquad N \in \Zb
\end{gather}
in two independent sets of variables.
(See \cite{Mac} for notation and further def\/initions involving symmetric functions.)
The coef\/f\/icients have the standard {\em content product} form that characterize such
 $2D$ $\tau$-functions of hypergeometric kind
 \begin{gather}
r_\lambda^{G(z)}(N) \deq r^{G(z)}_0(N) \prod_{(i,j)\in \lambda} G(z(N+ j-i)),
 \end{gather}
where
\begin{gather}
r^{G(z)}_0(N) := \prod_{j=1}^{N-1} G((N-j)z)^j, \qquad r^{G(z)}_0(0)  := 1, \\
 r^{G(z)}_0(-N) := \prod_{j=1}^{N} G((j-N)z)^{-j},
\qquad N\geq 1,
\end{gather}
and identical formulae for $G$ replaced by  $\tilde{G}$.

The other main result of \cite{GH2} is that the resulting $\tau$-functions~(\ref{tau_G_double_schur}),~(\ref{tau_G_tilde_double_schur}),
for $N=0$
\begin{gather}
\tau^{G(z)}({\bf t}, {\bf s}) := \tau^{G(z)}(0, {\bf t}, {\bf s}), \qquad
\tau^{\tilde{G}(z)}( {\bf t}, {\bf s}) :=\tau^{\tilde{G}(z)}(0, {\bf t}, {\bf s}),
\end{gather}
 when expanded in the basis of tensor products of pairs of  power sum symmetric functions $\{p_\mu\}$,
 using the Frobenius character formula (\ref{Frobenius_character}), are generating functions for the weighted  double Hurwitz numbers.
\begin{Theorem}[\cite{GH2}]
\begin{gather}
\tau^{G(z)} ({\bf t}, {\bf s})
= \sum_{d=0}^\infty \sum_{\substack{\mu, \nu, \\ \abs{\mu} = \abs{\nu}}} z^d H^d_G(\mu, \nu) p_\mu({\bf t}) p_\nu({\bf s}),
\\
\tau^{\tilde{G}(z)} ({\bf t}, {\bf s})
= \sum_{d=0}^\infty \sum_{\substack{\mu, \nu, \\ \abs{\mu} = \abs{\nu}}} z^d H^d_{\tilde{G}}(\mu, \nu) p_\mu({\bf t}) p_\nu({\bf s}).
\end{gather}
\end{Theorem}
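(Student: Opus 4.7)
Starting from the Schur expansion~(\ref{tau_G_double_schur}) at $N=0$, I would use the Frobenius character formula~(\ref{Frobenius_character}) to rewrite each Schur factor in the basis of power sums. The coefficient of $p_\mu({\bf t})\,p_\nu({\bf s})$ in $\tau^{G(z)}({\bf t},{\bf s})$ then becomes $z_\mu^{-1}\,z_\nu^{-1}\sum_{\lambda,\,|\lambda|=n}\chi_\lambda(\mu)\,\chi_\lambda(\nu)\,r_\lambda^{G(z)}(0)$, with $n=|\mu|=|\nu|$, and it remains to show that the coefficient of $z^d$ in this sum equals $H^d_G(\mu,\nu)$.

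The second step is to reinterpret the character sum via the central element $G_n(z,\JJ)\in \Zb(\Cb[S_n])$. Because the eigenvalues of $G_n(z,\JJ)$ on the idempotent basis $\{F_\lambda\}$ are precisely the content products $r_\lambda^{G(z)}(0)$, the change-of-basis relation between $\{F_\lambda\}$ and $\{C_\mu\}$ given by the character matrix, together with the orthogonality relations $\sum_\mu z_\mu^{-1}\chi_\lambda(\mu)\chi_{\lambda'}(\mu)=\delta_{\lambda\lambda'}$, shows that this character sum equals the coefficient of $C_\nu$ in the expansion of $G_n(z,\JJ)\,C_\mu$ in the cycle-sum basis, divided by $n!$.

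The third step is to compute the same coefficient from the Jucys--Murphy expansion~(\ref{Gn_z_JJ}), $G_n(z,\JJ)=\sum_\lambda e_\lambda({\bf c})\,m_\lambda(\JJ)$ (with the $z$-grading tracked through $m_\lambda(\JJ)\sim z^{|\lambda|}$), combined with the path-counting identity $m_\lambda(\JJ)\,C_\mu=\sum_\nu m^\lambda_{\mu\nu}\,C_\nu$. Collecting terms by $d=|\lambda|$ and dividing by $n!$ yields the combinatorial weighted Hurwitz number $F^d_G(\mu,\nu)$ defined in~(\ref{Fd_G_mu_nu}), which by the previous Theorem equals $H^d_G(\mu,\nu)$. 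This establishes the first identity; the second follows verbatim with $G$ replaced by $\tilde G$, using~(\ref{G_tilde_n_z_JJ}) in place of~(\ref{Gn_z_JJ}) and $h_\lambda(\tilde{\bf c})$ in place of $e_\lambda({\bf c})$.

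The principal bookkeeping obstacle is the $z$-grading: one must check that the powers of $z$ produced by expanding $r_\lambda^{G(z)}(0)=\prod_{(i,j)\in\lambda}G(z(j-i))$ correctly match the $z^{|\lambda|}$ grading appearing in the Jucys--Murphy expansion, and agree with the degree $d=\sum_i\ell^*(\mu^{(i)})$ used in the definition of $H^d_G(\mu,\nu)$. Once the diagonality of $G_n(z,\JJ)$ in the idempotent basis is combined with the dual Cauchy identity underlying~(\ref{Gn_z_JJ}), this compatibility is automatic, and the proof reduces to a direct transfer between the three equivalent descriptions of $G_n(z,\JJ)$ (diagonal on $\{F_\lambda\}$, structure-constant form on $\{C_\mu\}$, and symmetric-function form in the Jucys--Murphy generators).
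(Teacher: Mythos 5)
Your plan is sound and reaches the stated identities, but by a genuinely different route than the source. The paper recalls this theorem from \cite{GH2}, and the argument used there (and repeated for the multispecies analogue in Section~\ref{multispecies_geometric_hurwitz}) is the direct geometric one: substitute the Frobenius character formula (\ref{Frobenius_character}) into the Schur expansion (\ref{tau_G_double_schur}) at $N=0$, expand the content-product eigenvalue $r_\lambda^{G(z)}$ of $G_n(z,\JJ)$ through the expansion $\sum_\lambda m_\lambda({\bf c})\,e_\lambda(\JJ)$ together with $e_j(\JJ)=\sum_{\ell^*(\mu)=j}C_\mu$, and recognize the resulting character sums as Hurwitz numbers via the Frobenius--Schur formula (\ref{Frobenius_Schur_Hurwitz}); the combinatorial identity (\ref{Fd_G_mu_nu_Hd}) then emerges as a by-product of comparing the two expansions. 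You run the logic the other way: you first show that $\tau^{G(z)}$ generates the combinatorial numbers $F^d_G$, using the dual expansion (\ref{Gn_z_JJ}) in $e_\lambda({\bf c})\,m_\lambda(\JJ)$ and the path-counting action on cycle sums, and then invoke (\ref{Fd_G_mu_nu_Hd}) to replace $F^d_G$ by $H^d_G$. This is legitimate and not circular, since (\ref{Fd_G_mu_nu_Hd}) is established purely inside $\Zb(\Cb[S_n])$ without reference to $\tau$-functions; what your route buys is that Frobenius--Schur never has to be invoked explicitly, at the price of leaning on the earlier theorem. One bookkeeping caution: from $C_\mu=z_\mu^{-1}\sum_\lambda h_\lambda\chi_\lambda(\mu)F_\lambda$ and $F_\lambda=h_\lambda^{-1}\sum_\nu\chi_\lambda(\nu)C_\nu$, the coefficient of $p_\mu({\bf t})p_\nu({\bf s})$ in $\tau^{G(z)}$ is the $C_\nu$-coefficient of $G_n(z,\JJ)C_\mu$ divided by $z_\nu$, not by $n!$; your factor $1/n!$ is correct only under the convention that $m^\lambda_{\mu\nu}$ counts the initial element $h\in\cyc(\mu)$ as part of the path, so that $m_\lambda(\JJ)\,C_\mu=\sum_\nu (z_\nu/n!)\,m^\lambda_{\mu\nu}C_\nu$, which is precisely the normalization under which $F^d_G(\mu,\nu)=\frac{1}{n!}\sum_\lambda e_\lambda({\bf c})\,m^\lambda_{\mu\nu}$ coincides with $H^d_G(\mu,\nu)$. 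With that convention fixed, your three-step transfer (diagonal action on $\{F_\lambda\}$, structure constants on $\{C_\mu\}$, Jucys--Murphy expansion) closes correctly, and the $z$-grading compatibility you flag is indeed automatic, since the dual Cauchy identity produces the terms $e_\lambda({\bf c})\,z^{|\lambda|}m_\lambda(\JJ)$ whose degree $|\lambda|$ matches the total colength $d$.
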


\section[$2D$ Toda $\tau$-functions as generating functions  for multispecies weighted Hurwitz numbers]{$\boldsymbol{2D}$ Toda $\boldsymbol{\tau}$-functions as generating functions\\  for multispecies weighted Hurwitz numbers}
\label{generating_multispecies_weighted_hurwitz}

 For any choice of  weight generating functions   $G^1(w_1), G^2(w_2), \dots $,  $\tilde{G}^1(z_1), \tilde{G}^2(z_2), \dots $,
 we may form  composites  by  using the product $\prod_{\alpha}G^\alpha(w_\alpha)\prod_{\beta}\tilde{G}^\beta(z_\beta)$
as generating function for multiple weighting types.  The resulting content product coef\/f\/icients
$r_\lambda^{\prod_\alpha G^\alpha(w_\alpha)\prod_{\beta}\tilde{G}^\beta(w_\beta}$ are just the product  $ \prod_{\alpha} r_\lambda^{G^\alpha(w_\alpha)}\prod_{\beta} r_\lambda^{\tilde{G}^\beta(z_\beta)}$
of the individual ones
\begin{gather}
r_\lambda^{G^{\alpha}(w_\alpha)}\deq  \prod_{(i,j)\in \lambda} G^\alpha (w_\alpha(N+ j-i)), \qquad
r_\lambda^{\tilde{G}^{\beta}(z_\beta)}\deq  \prod_{(i,j)\in \lambda} \tilde{G}^\beta(z_\beta(N+ j-i)).
 \end{gather}

We may also include weight factors   in which some or all of the parameters
$(z_1, z_2, \dots)$, $(w_1, w_2, \dots)$ are  repeated in the product. This only  af\/fects the constraints on the sums of the colengths in the  weighted multispecies Hurwitz numbers. (See, e.g., Example~3.3 in~\cite{GH2}, in which the weights
are uniform, but the linear generating function that gives Hurwitz numbers for Belyi
curves and strictly monotonic paths is replaced by a power of the latter, resulting in multiple branch points,
with the total colength f\/ixed, and multimononic paths.)

\subsection[The multiparameter family  of $\tau$-functions $\tau^{G^{(l,m)}({\bf w},{\bf z})}({\bf t},{\bf s})$]{The multiparameter family  of $\boldsymbol{\tau}$-functions $\boldsymbol{\tau^{G^{(l,m)}({\bf w}, {\bf z})}({\bf t}, {\bf s})}$}
\label{multspecies_tau}

The multispecies partitions are divided into two classes: those corresponding to the weight factors of type $G^\alpha(w)$,
 labelled $\big\{  \mu^{(\alpha, u_\alpha)}\big\}_{1\le u_\alpha \le k_\alpha}$
   and those corresponding to  dual type $\tilde{G}^\beta(z)$,
 labelled $ \big\{\nu^{(\beta, v_\beta}\big\}_{1\le v_\beta \le \tilde{k}_\beta} $.
   These are  further subdivided into $l$ ``colours'', or ``species''  for the f\/irst class,
denoted by the label $\alpha =1, \dots , l$ and $m$ in the second, denoted by $\beta=1, \dots , m$.
Any given conf\/iguration
$\big\{ \big\{\mu^{(\alpha, u_\alpha)}\big\}_{1\le u_\alpha \le k_\alpha}, \big\{\nu^{(\beta, v_\beta}\big\}_{1\le v_\beta \le \tilde{k}_\beta}\big\}$
  has $k_\alpha$ elements of colour $\alpha$  in the f\/irst class  and $\tilde{k}_\beta$ elements of colour $\beta$
  in the second class,  for a total of
\begin{gather}
k = \sum_{\alpha=1}^l k_\alpha  + \sum_{\beta=1}^m \tilde{k}_\beta
\end{gather}
partitions.

Denoting the $l+m$  expansion parameters  as
\begin{gather}
{\bf w} = (w_1, \dots, w_l), \qquad {\bf z} = (z_1, \dots, z_m),
\end{gather}
the multispecies weight generating function is formed from the  product
\begin{gather}
G^{(l,m)}({\bf w}, {\bf z}) := \prod_{\alpha=1}^l G^{\alpha}(w_\alpha)  \prod_{\beta=1}^m\tilde{G}^\beta(z_\beta),
\end{gather}
where each  factor has an inf\/inite product representation that is of one of the two types
\begin{gather}
G^{\alpha}(w)  = \prod_{i=1}^\infty \big(1 + c_i^\alpha w\big),  \qquad \alpha =1, \dots , l,\\
\tilde{G}^{\beta}(w)  = \prod_{i=1}^\infty \big(1 - \tilde{c}_i^\beta w\big)^{-1},    \qquad \beta =1, \dots , m,
\end{gather}
for $l+m$ inf\/inite sequences of parameters
\begin{gather}
{\bf c}^\alpha  = \big(c^\alpha_1, c^\alpha_2, \dots\big), \qquad \alpha =1, \dots , l,  \qquad
 \tilde{\bf c}^\beta   = \big(\tilde{c}^\beta_1, \tilde{c}^\beta_2, \dots\big),  \qquad \beta =1, \dots, m.
\end{gather}

 Following the approach developed in \cite{GH2}, we def\/ine an associated element $G_n^{(l,m)}({\bf w}, {\bf z}, \JJ)$ of the center $\Zb(\Cb[S_n])$ of the group algebra $\Cb[S_n]$ by
    \begin{gather}
G_n^{(l,m)}({\bf w}, {\bf z}, \JJ) :=\prod_{a=1}^n\left( \prod_{\alpha=1}^l G^{\alpha}(w_\alpha\JJ_a)\right)
\left(  \prod_{\beta=1}^m\tilde{G}^\beta(z_\beta \JJ_a) \right),
\label{Glm_wzJJ}
\end{gather}
where $\JJ:= (\JJ_1, \dots, \JJ_n)$ are the Jucys--Murphy elements~(\ref{jucys_murphy}) of~$\Cb[S_n]$.
The element \linebreak $G_n^{(l,m)}({\bf w}, {\bf z}, \JJ)$  def\/ines  an endomorphism of~$\Zb(\Cb[S_n])$ under multiplication that is diagonal in the basis  $\{F_\lambda\}$ of $\Zb(\Cb[S_n])$ consisting of the orthogonal idempotents corresponding to irreducible representations, labelled by partitions~$\lambda$ of~$n$
\begin{gather}
G_n^{(l,m)}({\bf w}, {\bf z}, \JJ) F_\lambda =r_\lambda^{G^{(l,m)}({\bf w}, {\bf z}) }F_\lambda.
\label{central_G_nlmwz_generator_eigenvalue}
\end{gather}
The eigenvalues are
\begin{gather}
r_\lambda^{G^{(l,m)}({\bf w}, {\bf z}) }= \prod_{\alpha=1}^lr_\lambda^{G^\alpha(w_\alpha)}\prod_{\beta=1}^m
r_\lambda^{\tilde{G}^\beta(z_\beta)},
\end{gather}
where, as before,
\begin{gather}
   r_\lambda^{G^\alpha(w_\alpha) } := \prod_{(ij)\in \lambda} G^\alpha(w_\alpha(j-i)),  \qquad  r_\lambda^{\tilde{G}(z_\beta)}
    := \prod_{(ij)\in \lambda}  \tilde{G}^\beta(z_\beta(j-i)).
\end{gather}

The associated hypergeometric $\tau$-function is
\begin{gather}
\tau^{G^{(l,m)}({\bf w}, {\bf z})}({\bf t}, {\bf s})    = \sum_{\lambda}
r_\lambda^{G^{(l,m)}({\bf w}, {\bf z}) }    s_\lambda({\bf t}) s_\lambda({\bf s}) .
\label{multi_colour_hybrid_weighted_tau}
\end{gather}

\subsection{Multispecies geometric weighted Hurwitz numbers}
\label{multispecies_geometric_hurwitz}

 The weight assigned to a multispecies $n$-sheeted branched covering
 of the Riemann sphere with a pair of branch points at, say, $(0, \infty)$ having
 ramif\/ication prof\/iles $(\mu, \nu)$ and a further
\begin{gather}
 \sum_{\alpha=1}^l k_\alpha + \sum_{\beta=1}^l \tilde{k}_\beta
\end{gather}
 ``coloured''  branch  points of two classes:
 $\big\{\mu^{(\alpha, u_\alpha)}\big\}_{\substack{\alpha =1, \dots, l \\ u_\alpha = 1, \dots, k_\alpha}}$ (``class~I'' ) and
 $\big\{\nu^{(\beta, v_\beta)}\big\}_{\substack{\beta =1, \dots, l \\ v_\beta = 1, \dots, \tilde{k}_\beta}}$ (``class~II'')
 and $l+m$ colours $\{\alpha = 1, \dots, l\}$, $\{\beta = 1, \dots, m\}$,
with $\{k_\alpha\}_{\alpha=1, \dots, l}$, $\{\tilde{k}_\beta\}_{\beta=1, \dots, m}$
points of the various colours (or ``species''),  is def\/ined to be the product of those for single species
 \begin{gather}
W_{G^{(l,m)}} \big(\big\{\mu^{(\alpha, u_\alpha)}\big\}, \big\{\nu^{(\beta, v_\beta}\big\}\big)  =
\prod_{\alpha=1}^l  m_{\lambda^{(\alpha)}} \big({\bf c}^{(\alpha)}\big)
\prod_{\beta=1}^m m_{\tilde{\lambda}^{(\beta)}} \big(\tilde{\bf c}^\beta\big).
 \end{gather}
 Here  the partitions $\big\{\lambda^{(\alpha)}\big\}_{\alpha=1, \dots, l}$,
 and $\big\{\tilde{\lambda}^{(\beta)}_{\beta=1, \dots, m}\big\}$  have parts  equal to the
 colengths $\ell^*\big(\mu^{(\alpha, u_\alpha)}\big)$ and $\ell^*\big(\mu^{(\beta, v_\beta)}\big)$, for $\lambda^{(\alpha)}$ and $\tilde{\lambda}^{(\beta)}$  respectively.
Their lengths are
\begin{gather}
 \ell\big(\lambda^{(\alpha)}\big) = k_\alpha, \qquad   \ell\big(\tilde{\lambda}^{(\beta)}\big) = \tilde{k}_\beta
\end{gather}
 and their weights
\begin{gather}
 \big|\lambda^{(\alpha)})\big| = d_\alpha, \qquad   \big|\tilde{\lambda}^{(\beta)}\big| = \tilde{d}_\beta
\end{gather}
are equal to the specif\/ied total colengths
 \begin{gather}
 {\bf d}  =(d_1, \dots , d_l),  \qquad \tilde{\bf d} =(d_1, \dots , \tilde{d}_m),  \\
 d_\alpha  = \sum_{u_\alpha=1}^{k_\alpha} \ell^*\big(\mu^{(\alpha, u_\alpha)}\big), \qquad  \tilde{d}_\beta = \sum_{v_\beta=1}^{\tilde{k}_\beta} \ell^*\big(\nu^{(\beta v_\beta)}\big).
\end{gather}

The geometrically def\/ined  multispecies weighted Hurwitz numbers are
 \begin{gather}
H_{G^{(l,m)}}^{({\bf d}, \tilde{\bf d})}(\mu, \nu)
 := \sum_{k_1, \dots, k_l}  \sum_{\tilde{k}_1, \dots , \tilde{k}_m}   \sideset{}{'}\sum_{\substack{\{\mu^{(\alpha, u_\alpha)}\} \\ |\mu^{(\alpha, u_\alpha)}|=n \\
 \sum\limits_{u_\alpha=1}^{k_\alpha} \ell^*(\mu^{(\alpha, u_\alpha)}) = d_\alpha}}
 \sideset{}{'}\sum_{\substack{\{\nu^{(\beta, v_\beta)}\} \\ |\nu^{(\beta, u_\beta)}|=n \\
 \sum\limits_{v_\beta=1}^{\tilde{k}_\beta} \ell^*(\nu^{(\beta, v_\beta)}) = \tilde{d}_\beta}}\nonumber\\
 \hphantom{H_{G^{(l,m)}}^{({\bf d}, \tilde{\bf d})}(\mu, \nu) :=}{}
  \times W_{G^{(l,m)}} \big(\big\{\mu^{(\alpha, u_\alpha)}\big\}, \big\{\nu^{(\beta, v_\beta)}\big\}\big)
H\big(\big\{\mu^{(\alpha, u_\alpha)}\big\}, \big\{\nu^{(\beta, v_\beta)}\big\}, \mu, \nu\big),
\end{gather}
which gives the weighted sum of the Hurwitz numbers  of $n$-sheeted branched coverings with  $l+m$ branch points of type
$\big\{ \big\{\mu^{(\alpha, u_\alpha)}\big\}_{1\le u_\alpha \le k_\alpha}, \big\{\nu^{(\beta, v_\beta}\big\}_{1\le v_\beta \le \tilde{k}_\beta}\big\}$
and $(\mu, \nu)$ at $(0, \infty)$,

Substituting the Frobenius--Schur formula~(\ref{Frobenius_Schur_Hurwitz}) and the
Frobenius character formula~(\ref{Frobenius_character})
into~(\ref{multi_colour_hybrid_weighted_tau}), it follows that  $\tau^{G^{(l,m)}({\bf w}, {\bf z})}({\bf t}, {\bf s})   $
is the generating function for  $H_{G^{(l,m)}}^{({\bf d}, \tilde{\bf d})}(\mu, \nu)  $.
Using multi-index notion to denote
\begin{gather}
\prod_{\alpha=1}^l w_\alpha^{d_\alpha} \prod_{\beta=1}^m z_\beta^{\tilde{d}_\beta} =:  {\bf w}^{\bf d} {\bf z}^{\tilde{\bf d}},
\end{gather}
  we then have
\begin{Theorem}
\begin{gather}
\tau^{G^{(l,m)}({\bf w}, {\bf z})}({\bf t}, {\bf s})   =\sum_{{\bf d} \in \Nb}  {\bf w}^{\bf d} \sum_{\tilde{\bf d} \in \Nb }
{\bf z}^{\tilde{\bf d}}\sum_{\mu, \nu}
H_{G^{(l,m)}}^{({\bf d}, \tilde{\bf d})}(\mu, \nu)  p_\mu({\bf t}) p_\nu({\bf s}).
 \label{multispecies_geometric_tau}
 \end{gather}
\end{Theorem}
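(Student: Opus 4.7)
The argument mirrors the proof of the single species Theorem in Section~\ref{hypergeometric_tau_generators}; the multispecies setting introduces no essential new difficulty, because both the content product eigenvalue and the geometric weight factorize across the $l+m$ species.

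First I would apply the Frobenius character formula~\eqref{Frobenius_character} to both $s_\lambda({\bf t})$ and $s_\lambda({\bf s})$ in the diagonal Schur expansion~\eqref{multi_colour_hybrid_weighted_tau}, rewriting $\tau^{G^{(l,m)}({\bf w},{\bf z})}({\bf t},{\bf s})$ in the double power sum basis with coefficient of $p_\mu({\bf t})p_\nu({\bf s})$ equal to
\begin{equation*}
\frac{1}{z_\mu z_\nu}\sum_\lambda \chi_\lambda(\mu)\chi_\lambda(\nu) \prod_{\alpha=1}^l r_\lambda^{G^\alpha(w_\alpha)}\prod_{\beta=1}^m r_\lambda^{\tilde{G}^\beta(z_\beta)}.
\end{equation*}
It then suffices to identify, for each multi-index $({\bf d},\tilde{\bf d})$, the coefficient of ${\bf w}^{\bf d}{\bf z}^{\tilde{\bf d}}$ in this character sum with $H^{({\bf d},\tilde{\bf d})}_{G^{(l,m)}}(\mu,\nu)$.

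Next I would expand each eigenvalue factor $r_\lambda^{G^\alpha(w_\alpha)}$ as a formal power series in $w_\alpha$ using the eigenvalue relation~\eqref{central_G_nlmwz_generator_eigenvalue} together with the Cauchy--Littlewood expansion~\eqref{Gn_z_JJ} of $G^\alpha_n(w_\alpha,\JJ)$, the identity $e_j(\JJ) = \sum_{\mu,\,\ell^*(\mu)=j}C_\mu$, and the basis change $F_\lambda = h_\lambda\sum_\mu\chi_\lambda(\mu)C_\mu$; this is the core computation of the single species proof of~\cite{GH2}. The dual $r_\lambda^{\tilde{G}^\beta(z_\beta)}$ is expanded analogously via~\eqref{G_tilde_n_z_JJ}, with $f_{\tilde\lambda^{(\beta)}}(\tilde{\bf c}^\beta)$ in place of $m_{\lambda^{(\alpha)}}({\bf c}^\alpha)$. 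Multiplying the $l+m$ per-species expansions and interchanging the summations, the coefficient of ${\bf w}^{\bf d}{\bf z}^{\tilde{\bf d}}$ collects the factorized weight $\prod_\alpha m_{\lambda^{(\alpha)}}({\bf c}^\alpha)\prod_\beta m_{\tilde\lambda^{(\beta)}}(\tilde{\bf c}^\beta) = W_{G^{(l,m)}}$, summed over all configurations of coloured branch points with the prescribed total colengths $(d_\alpha)$ and $(\tilde d_\beta)$, together with $h_\lambda^{\sum_\alpha k_\alpha+\sum_\beta \tilde k_\beta}$ and one ratio $\chi_\lambda(\cdot)/z_\cdot$ per coloured branch point.

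Combined with the prefactor $\chi_\lambda(\mu)\chi_\lambda(\nu)/(z_\mu z_\nu)$, this produces $h_\lambda^{k-2}$ with $k=2+\sum_\alpha k_\alpha+\sum_\beta\tilde k_\beta$ the total number of branch points, and the remaining character sum over $\lambda$ is precisely the Frobenius--Schur formula~\eqref{Frobenius_Schur_Hurwitz} for the unweighted Hurwitz number $H(\{\mu^{(\alpha,u_\alpha)}\},\{\nu^{(\beta,v_\beta)}\},\mu,\nu)$, thereby reassembling the definition of $H^{({\bf d},\tilde{\bf d})}_{G^{(l,m)}}(\mu,\nu)$ from Section~\ref{multispecies_geometric_hurwitz}. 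The principal obstacle is bookkeeping: verifying that the multi-index summations decouple cleanly by species and that the constraints $\sum_{u_\alpha}\ell^*(\mu^{(\alpha,u_\alpha)})=d_\alpha$ and $\sum_{v_\beta}\ell^*(\nu^{(\beta,v_\beta)})=\tilde d_\beta$ separate as intended. No genuinely new analytical ingredient is required beyond the single species result, by virtue of the multiplicative factorization $r_\lambda^{G^{(l,m)}({\bf w},{\bf z})}=\prod_\alpha r_\lambda^{G^\alpha(w_\alpha)}\prod_\beta r_\lambda^{\tilde{G}^\beta(z_\beta)}$ of the content product eigenvalues.
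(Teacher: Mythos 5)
Your proposal is correct and follows essentially the same route as the paper, whose one-line proof combines the factorized content-product eigenvalue formula~(\ref{central_G_nlmwz_generator_eigenvalue}) with the Frobenius character formula~(\ref{Frobenius_character}) and the Frobenius--Schur formula~(\ref{Frobenius_Schur_Hurwitz}), i.e.\ the single-species argument of~\cite{GH2} carried out species by species exactly as you describe. The only detail to keep straight is that the dual factors contribute the forgotten symmetric functions $f_{\tilde{\lambda}^{(\beta)}}\big(\tilde{\bf c}^\beta\big)$, as you note when invoking~(\ref{G_tilde_n_z_JJ}), consistent with the single-species dual weight~(\ref{dual_WG_weight}); this is the intended reading of the multispecies weight $W_{G^{(l,m)}}$.
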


 \begin{proof}
 This follows, as in the single species case, by combining the eigenvalue formula~(\ref{central_G_nlmwz_generator_eigenvalue})
 with the Frobenius character formula.
 \end{proof}

\subsection{Multispecies combinatorial weighted Hurwitz numbers}
\label{multispecies_combinatorial_hurwitz}

The combinatorial multispecies weighted Hurwitz number $F_{G^{(l,m)}}^{({\bf d}, \tilde{\bf d})}(\mu, \nu) $ is
 def\/ined as follows.
 Let~$D_n$ be the number of partitions of  $n$ and let ${\bf F}_{G^\alpha}^{d_\alpha}$ and
${\bf F}_{\tilde{G}^\beta}^{\tilde{d}_\beta}$ denote  the $D_n \times D_n$ matrices whose elements
 are $F^{d_\alpha}_{G^\alpha}(\mu, \nu)$ and $F^{\tilde{d}_\beta}_{\tilde{G}^\beta}(\mu, \nu)$,
  respectively, as def\/ined in~(\ref{Fd_G_mu_nu}),~(\ref{Fd_tilde_G_mu_nu}),
  \begin{gather}
F^d_{G^\alpha}(\mu, \nu)  := {1\over n!} \sum_\lambda e_\lambda\big({\bf c}^\alpha\big) m^\lambda_{\mu \nu},
 \qquad
F^d_{\tilde{G}^\beta}(\mu, \nu)   := {1\over n!} \sum_\lambda h_\lambda\big(\tilde{\bf c}^\beta\big) m^\lambda_{\mu \nu},
\end{gather}
 for each generating function $G^\alpha(w_\alpha)$ or $\tilde{G}^\beta(z_\beta)$.
Since the central elements
 $\big\{G_n^\alpha(w_\alpha, \JJ)$,\linebreak $\tilde{G}_n^\beta (z_\beta, \JJ)\big\}$
all commute, it follows that so do the matrices $\big\{{\bf F}^{d_\alpha}_{G^\alpha},
{\bf F}^{\tilde{d}_\beta}_{\tilde{G}^\beta}\big\}$.  Denoting their product, in any order,
\begin{gather}
{\bf F}^{({\bf d}, \tilde{\bf d})}_{G^{(l,m)}}
:= \prod_{\alpha=1}^l {\bf F}_{G^\alpha}^{d_\alpha} \prod_{\beta=1}^m {\bf F}_{\tilde{G}^\beta}^{\tilde{d}_\beta},
\end{gather}
the $(\mu, \nu)$  matrix element $F^{({\bf d}, \tilde{\bf d})}_{G^{(l, m)}}(\mu, \nu)$ is the combinatorial multispecies
weighted Hurwitz number.

The combinatorial meaning of $F^{({\bf d}, \tilde{\bf d})}_{G^{(l, m)}}(\mu, \nu)$ is as follows.
Let
\begin{gather}
d := \sum_{\alpha=1}^l d_\alpha + \sum_{\beta=1}^m \tilde{d}_\beta.
\end{gather}
Then  $F^{({\bf d}, \tilde{\bf d})}_{G^{(l,m)}}  (\mu, \nu)$, may be interpreted as the weighted sum
over all sequences  of $d$ step paths in the Cayley graph from an element $h \in \cyc(\mu)$ in
the conjugacy class of cycle  type $\mu$ to one $(a_db_d) \cdots (a_1b_1)h \in \cyc(\nu)$ in the
class  $\cyc(\nu)$, in which the transpositions appearing are subdivided into subsets consisting of $(d_1, \dots, d_l, \tilde{d}_1, \dots , \tilde{d}_m)$  transpositions  in all $ {d! \over  \prod\limits_{\alpha=1}^l d_\alpha !    \prod\limits_{\beta=1}^l \tilde{d}_\beta !}$ possible ways. All paths are divided into equivalence classes, according to their {\em multisignatures}  $\big\{\lambda^{(\alpha)}, \tilde{\lambda}^{(\beta)}\big\}_{\substack{\alpha =1, \dots, l \\ \beta =1, \dots , m}}$.
These consist  of a partition of the $d$  steps into $l +m$ parts, each of which is a subsequence assigned a~``colour'' and a~``class'' with~$l$ of them of the f\/irst class and~$m$ of  the second.
The number of partitions of f\/irst class with colour~$\alpha$ is~$d_\alpha$  while the number of
second class with colour~$\beta$ is $\tilde{d}_\beta$. The partitions  $\lambda^{(\alpha)}$ of  weights $d_\alpha$
are def\/ined to have parts  $\big\{ \lambda^{(\alpha)}_{u_\alpha}\big\}_{u_\alpha=1, \dots, k_\alpha}$ equal to the number of transpositions appearing within that subsequence having the same second element, and similarly for
$\big\{\tilde{\lambda}^{(\beta)}_{v_\beta}\big\}_{v_\beta =1, \dots , \tilde{k}_\beta}$
with
\begin{gather}
k_\alpha= \ell\big( \lambda^{(\alpha)}\big), \qquad\tilde{k}_\beta= \ell\big( \tilde{\lambda}^{(\beta)}\big)
\end{gather}
the number of such parts.

The weight given to any such multisignatured path is  the product
\begin{gather}
\prod_{\alpha=1}^l e_{\lambda^{(\alpha)}} \big({\bf c}^\alpha\big)
\prod_{\beta =1}^m h_{\tilde{\lambda}^{(\beta)}} \big(\tilde{{\bf c}}^\beta\big)
\end{gather}
of the weights along each segment, and
 $F^{({\bf d}, \tilde{\bf d})}_{G^{(l,m)}}  (\mu, \nu)$  is the sum of these, each  multiplied by the number
of elements of the equivalence class of paths with the given multisignature.

The multispecies generalization of (\ref{Fd_G_mu_nu_Hd}) is  equality of the geometric and combinatorial Hurwitz numbers:
\begin{Theorem}
\begin{gather}
F^{({\bf d}, \tilde{\bf d})}_{G^{(l, m)}}  (\mu, \nu)   = H^{({\bf d}, \tilde{\bf d})}_{G(l, m)}  (\mu, \nu),
 \qquad
F^{({\bf d}, \tilde{\bf d})}_{\tilde{G}(l, m)}  (\mu, \nu)  = H^{({\bf d}, \tilde{\bf d})}_{\tilde{G}(l, m)}  (\mu, \nu) .
\end{gather}
\end{Theorem}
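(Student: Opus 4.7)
The plan is to reduce the multispecies identity to the single-species theorem already stated, by exploiting the fact that all the factors $G_n^\alpha(w_\alpha, \JJ)$ and $\tilde{G}_n^\beta(z_\beta, \JJ)$ lie in the abelian subalgebra of $\Zb(\Cb[S_n])$ generated by the Jucys--Murphy elements, and hence commute and are simultaneously diagonalized in the orthogonal idempotent basis $\{F_\lambda\}$.

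First I would extract the content of the generating function identity just proved: by combining the expansion \eqref{multispecies_geometric_tau} with the Frobenius character formula \eqref{Frobenius_character} and the $s_\lambda$--expansion of $\tau^{G^{(l,m)}({\bf w},{\bf z})}$, one obtains the explicit formula
\begin{gather*}
H^{({\bf d},\tilde{\bf d})}_{G^{(l,m)}}(\mu,\nu) = \frac{1}{z_\mu z_\nu}\sum_{\lambda,\, |\lambda|=n}  \bigl[{\bf w}^{\bf d}{\bf z}^{\tilde{\bf d}}\bigr]\, r_\lambda^{G^{(l,m)}({\bf w},{\bf z})}\,\chi_\lambda(\mu)\chi_\lambda(\nu),
\end{gather*}
in which the content-product eigenvalue factorizes as $r_\lambda^{G^{(l,m)}({\bf w},{\bf z})}=\prod_\alpha r_\lambda^{G^\alpha(w_\alpha)}\prod_\beta r_\lambda^{\tilde{G}^\beta(z_\beta)}$ and therefore the $[{\bf w}^{\bf d}{\bf z}^{\tilde{\bf d}}]$ coefficient likewise factorizes into a product of single-variable coefficients.

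Next I would interpret the combinatorial side in the same language. By construction, the single-species matrix $\mathbf{F}^{d_\alpha}_{G^\alpha}$ is, up to the basis change between $\{C_\mu\}$ and $\{F_\lambda\}$, the multiplication operator on $\Zb(\Cb[S_n])$ given by the $w_\alpha^{d_\alpha}$ coefficient of $G_n^\alpha(w_\alpha,\JJ)$; this is precisely what the single-species theorem cited above records, together with the eigenvalue identity $G_n^\alpha(w_\alpha,\JJ)F_\lambda=r_\lambda^{G^\alpha(w_\alpha)}F_\lambda$. Since all these operators commute and are simultaneously diagonalized in $\{F_\lambda\}$, the matrix product $\mathbf{F}^{({\bf d},\tilde{\bf d})}_{G^{(l,m)}}$, in any order, represents multiplication by the $[{\bf w}^{\bf d}{\bf z}^{\tilde{\bf d}}]$ coefficient of $G_n^{(l,m)}({\bf w},{\bf z},\JJ)$, whose $F_\lambda$-eigenvalue is $[{\bf w}^{\bf d}{\bf z}^{\tilde{\bf d}}]r_\lambda^{G^{(l,m)}({\bf w},{\bf z})}$. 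Transcribing back into the $C_\mu$ basis via $F_\lambda = h_\lambda\sum_\mu \chi_\lambda(\mu)C_\mu$ reproduces exactly the character-sum formula written above for $H^{({\bf d},\tilde{\bf d})}_{G^{(l,m)}}(\mu,\nu)$, and the dual statement is proved identically with $e_\lambda$, $m_\lambda$ replaced by $h_\lambda$, $f_\lambda$.

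The only genuine content beyond bookkeeping is the verification that matrix multiplication of the $\mathbf{F}$'s really does correspond to operator multiplication of the centre-of-group-algebra elements; this in turn rests on the identity $m_\lambda(\JJ)C_\mu=\sum_\nu m^\lambda_{\mu\nu}C_\nu$ (and its dual) already invoked in the single-species proof. The main obstacle is therefore purely notational: keeping the colours, classes, multi-indices, and the two distinct dualities ($G$ vs.\ $\tilde G$, $m_\lambda$ vs.\ $f_\lambda$, $e_\lambda$ vs.\ $h_\lambda$) straight so that the factorization of the content product, the factorization of the coefficient extraction $[{\bf w}^{\bf d}{\bf z}^{\tilde{\bf d}}]$, and the ordered product of commuting matrices line up term by term. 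Once this bookkeeping is in place the proof is simply the observation that, on each joint eigenspace $\Cb F_\lambda$, both the geometric and combinatorial sides act by the same scalar.
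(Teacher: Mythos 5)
Your proposal is correct and follows essentially the same route as the paper: the paper likewise applies the commuting central elements $G_n^{\alpha}(w_\alpha,\JJ)$, $\tilde{G}_n^{\beta}(z_\beta,\JJ)$ to the cycle-sum basis $\{C_\mu\}$ (via $m_\lambda(\JJ)C_\mu=\sum_\nu m^\lambda_{\mu\nu}C_\nu$ and the single-species expansions) to identify the matrix product of the $\mathbf{F}$'s with the coefficient of the multispecies central element, and then matches coefficients against the $p_\mu({\bf t})p_\nu({\bf s})$ expansion of $\tau^{G^{(l,m)}({\bf w},{\bf z})}$ coming from the diagonal $F_\lambda$-eigenvalues and the Frobenius character formula. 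Your direct comparison of character sums is just a mild repackaging of the paper's comparison of the two power-sum expansions of the same $\tau$-function.
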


\begin{proof}
Applying the central element $G_n^{(l,m)}({\bf w}, {\bf z}, \JJ)$ def\/ined in (\ref{Glm_wzJJ})
to the cycle sum $C_\mu$  and applying~(\ref{Gn_z_JJ}) for each factor in the product gives
\begin{gather}
G_n^{(l,m)}({\bf w}, {\bf z}, \JJ)  C_\mu
 = \sum_{\nu,\,  |\nu|=|\mu|} F^{({\bf d}, \tilde{\bf d})}_{G^{(l,m)}}  (\mu, \nu)   C_\nu
 =\sum_{\nu, \, |\nu|=|\mu|} H^{({\bf d}, \tilde{\bf d})}_{G^{(l,m)}}  (\mu, \nu)   C_\nu,
\end{gather}
 and the similar formula for $\tilde{G}$.
Equation (\ref{multispecies_geometric_tau}), together with the Frobenius character formula,  shows that the
generating $\tau$-function can be expressed as
\begin{gather}
\tau^{G^{(l,m)}({\bf w}, {\bf z})}({\bf t}, {\bf s})   =\sum_{{\bf d} \in \Nb}  {\bf w}^{\bf d} \sum_{\tilde{\bf d} \in \Nb }
{\bf z}^{\tilde{\bf d}}\sum_{\mu, \nu}
 F^{({\bf d}, \tilde{\bf d})}_{G^{(l,m)}}  (\mu, \nu) p_\mu({\bf t}) p_\nu({\bf s}).  
 \end{gather}
Comparing this with equation~(\ref{multispecies_geometric_tau}) proves the result.
\end{proof}

\section{Multispecies quantum Hurwitz numbers}
\label{multispecies_quantum_hurwitz}

\subsection{Quantum Hurwitz numbers}
\label{quantum_hurwitz}

 Amongst the examples  of weighted Hurwitz numbers studied in~\cite{GH2},  four special classes were introduced
 in which the generating functions $G(z)$, $\tilde{G}(z)$ were chosen as a variant of the quantum dilogarithm~\cite{FK}. This meant that the parameters~$\{c_i\}$ were chosen as powers of a quantum deformation parameter~$q$.  As shown in~\cite{GH2} , a suitable interpretation of the parameter~$q$ in terms of Planck's constant $\hbar$ and Boltzmann factors
for a Bosonic gas with linear  energy spectrum leads to a relation between  the  resulting weighted counting
of branched cover and the energy distribution for a Bosonic gas, which further justif\/ies terming these ``quantum'' Hurwitz numbers.

In the f\/irst case, the weight generating function  is
\begin{gather}
E(q,z)  := \prod_{i=0}^\infty \big(1+q^i z\big) = 1 +\sum_{i=1}^\infty E_i(q)z^i , \qquad
E_i(q) := \left(\prod_{j=1}^i {q^{j-1} \over 1 - q^j}\right),
\end{gather}
and hence the parameters $c_i$ are identif\/ied  as $\{c_i := q^{i-1}\}_{i\in \Nb^+}$.
The second is a slight modi\-f\/i\-ca\-tion of this, with weight generating function
\begin{gather}
E'(q,z)  := \prod_{i=1}^\infty \big(1+q^i z\big) = 1 +\sum_{i=1}^\infty E'_i(q)z^i , \qquad
E'_i(q) := \left(\prod_{j=1}^i {q^i \over 1 - q^j}\right),
\end{gather}
i.e.,  the zero power $q^0$ is omitted, and $\{c_i := q^{i}\}_{i\in \Nb^+}$.

The third case is based on the  weight generating function
\begin{gather}
H(q,z)  := \prod_{i=0}^\infty\big(1-q^i z\big)^{-1} =1 +\sum_{i=1}^\infty H_i(q)z^i , \qquad
H_i(q) := \left(\prod_{j=1}^i {1 \over 1 - q^j}\right),
\end{gather}
and hence is the dual of the f\/irst case,  with $\{ \tilde{c}_i := q^{i-1}\}_{i\in \Nb^+}$.
The f\/inal case is a hybrid, formed from the product  of the f\/irst and third
for two distinct quantum deformation parameters~$q$ and~$p$, with weight generating function
\begin{gather}
Q(q, p, z)   := \prod_{k=0}^\infty \big(1+ q^k z\big) \big(1- p^k z\big)^{-1} = \sum_{i=0}^\infty Q_i(q,p)z^i, \\
Q_i(q,p)   := \sum_{m=0}^i q^{\frac{1}{2}m(m-1)} \left(\prod_{j=1}^m \big(1-q^j\big) \prod_{j=1}^{i -m}(1-p^j)\right)^{-1}\!,
\qquad Q_\lambda(q,p) =\prod_{i=1}^{\ell(\lambda)} Q_{\lambda_i}(q,p).\!\!\!
\end{gather}
These are all expressible as exponentials  of the quantum dilogarithm function
\begin{gather}
\Li_2(q, z)  \deq \sum_{k=1}^\infty \frac{z^k}{k (1- q^k)},\\
 E(q, z)  = e^{-\Li_2(q, -z)}, \qquad   E'(q, z) = (1+z)^{-1}e^{-\Li_2(q, -z)}, \\
 H(q, z)    = e^{\Li_2(q, z)}, \qquad   Q(q, p, z)  = e^{\Li_2(p, z)-\Li_2(q, -z)}.
\end{gather}

   The  content product formulae for the f\/irst and third of these are
\begin{gather}
r_\lambda^{E(q,z)}(N)    := \prod_{(ij) \in \lambda} E(q, (N+j-i)z), \qquad
r_\lambda^{H(q,z)}(N)    := \prod_{(ij) \in \lambda} H(q, (N+ j-i)z).
\end{gather}
The associated hypergeometric $2D$ Toda $\tau$-functions have
diagonal double Schur function expansions with these as coef\/f\/icients
\begin{gather}
\tau^{E(q,z)}(N, {\bf t}, {\bf s})  =  \sum_{\lambda} r_\lambda^{E(q,z)} (N) S_\lambda({\bf t}) S_\lambda({\bf s}), \\
\tau^{H(q,z)}(N, {\bf t}, {\bf s})  =  \sum_{\lambda} r_\lambda^{H(q,z)} (N) S_\lambda({\bf t}) S_\lambda({\bf s)}).
\end{gather}

Using the Frobenius character formula~(\ref{Frobenius_character}),
and setting $N=0$, they may be rewritten as double expansions
 in the power sum symmetric functions~\cite{GH2}
 \begin{gather}
\tau^{E(q,z)}({\bf t}, {\bf s}) := \tau^{E(q,z)}(0, {\bf t}, {\bf s})=\sum_{d=0}^\infty z^d\sum_{\mu, \nu, \, |\mu|=|\nu|}  H^d_{E(q)}(\mu, \nu)   p_\mu({\bf t}) p_\nu({\bf s}), \\ 
\tau^{H(q,z)}({\bf t}, {\bf s})   :=\tau^{H(q,z)}(0, {\bf t}, {\bf s}) = \sum_{d=0}^\infty z^d \sum_{\mu, \nu, \, |\mu|=|\nu|}  H^d_{H(q)}(\mu, \nu)  p_\mu({\bf t}) p_\nu({\bf s}).
\label{quantum_hurwitz_expansion}
\end{gather}
The coef\/f\/icients are the corresponding quantum Hurwitz numbers $H^d_{E(q)}(\mu, \nu)$, $H^d_{H(q)}(\mu, \nu)$,
 which count weighted $n$-sheeted branched coverings of the Riemann sphere,   def\/ined by
\begin{gather}
H^d_{E(q)}(\mu, \nu) := \sum_{k=0}^\infty  \sideset{}{'}\sum_{\mu^{(1)}, \dots, \mu^{(k)} \atop \sum\limits_{i=1}^k \ell^*(\mu^{(i)} )= d}
 W_{E(q)}\big(\mu^{(1)}, \dots , \mu^{(k)}\big)  H\big(\mu^{(1)}, \dots , \mu^{(k)}, \mu, \nu\big) ,\\
H^d_{H(q)}(\mu, \nu) := \sum_{k=0}^\infty   (-1)^{k+d} \sideset{}{'}\sum_{\mu^{(1)}, \dots, \mu^{(k)} \atop \sum\limits_{i=1}^k \ell^*(\mu^{(i)} )= d}
 W_{H(q)}\big(\mu^{(1)}, \dots , \mu^{(k)}\big)  H\big(\mu^{(1)}, \dots , \mu^{(k)}, \mu, \nu\big),
\end{gather}
where the weightings for such covers with $k$ additional branch points are~\cite{GH2}
\begin{gather}
W_{E(q)} \big(\mu^{(1)}, \dots, \mu^{(k)}\big)
 \nonumber\\
 \qquad {} ={1\over\abs{\aut(\lambda)}} \sum_{\sigma\in S_k}   {q^{(k-1) \ell^*(\mu^{(1)})} \cdots  q^{ \ell^*(\mu^{(k-1)})} \over
 \big(1- q^{\ell^*(\mu^{(\sigma(1))})} \big)  \cdots \big(1- q^{\ell^*(\mu^{(\sigma(1))})}\big) \cdots q^{\ell^*(\mu^{(\sigma(k))})}},\\
 W_{H(q)} \big(\mu^{(1)}, \dots, \mu^{(k)}\big)\nonumber\\
 \qquad{} = {(-1)^{\ell^*(\lambda)}\over\abs{\aut(\lambda)}}\sum_{\sigma\in S_k}   {1 \over
 \big(1- q^{\ell^*(\mu^{(\sigma(1))})} \big)  \cdots \big(1- q^{\ell^*(\mu^{(\sigma(1))})}\big) \cdots q^{\ell^*(\mu^{(\sigma(k))})}}.
\end{gather}
   Here, as in (\ref{WG_weight})  and (\ref{dual_WG_weight}),  $\lambda$ is the partition of  length  $\ell(\lambda)=k$ whose parts are $\big\{\ell^*\big(\mu^{(i)}\big)\big\}_{i=1, \dots, k}$,
and $\abs{\aut(\lambda)} $ is the order of the automorphism group of $\lambda$ .
 The sum $\sum'_{\mu^{(1)}, \dots, \mu^{(k)} \atop \sum\limits_{i=1}^k \ell^*(\mu^{(i)} )= d} $ is over all $k$-tuples of partitions  having nontrivial ramif\/ication prof\/iles that satisfy  the constraint $\sum\limits_{i=1}^k \ell^*(\mu^{(i)} )= d$,  and $H\big(\mu^{(1)}, \dots , \mu^{(k)}, \mu, \nu\big)$ is the number of branched $n$-sheeted coverings, up to isomorphism,  having $k+2$ branch points with ramif\/ication prof\/iles $\big(\mu^{(1)}, \dots , \mu^{(k)}, \mu, \nu\big)$.

These thus count weighted covers with a pair of branch points, say  $(0, \infty)$, having
ramif\/ication prof\/iles of type $(\mu, \nu)$  and an  arbitrary number
of further  branch points, whose prof\/iles $(\mu^{(1)}$, $\dots , \mu^{(k)})$ are constrained  only  by the requirement that the sum of the colengths,
which is related to the genus by the Riemann--Hurwitz formula
\begin{gather}
\sum_{i=1}^k \ell^*(\mu^{(i)} )= 2g -2  +\ell(\mu) +\ell(\nu) =d,
\end{gather}
be f\/ixed to equal $d$.

The combinatorial interpretation of the quantum Hurwitz numbers
 $ F^d_{E(q)}(\mu, \nu)$ and $ F^d_{H(q)}(\mu, \nu)$  appearing in~(\ref{quantum_hurwitz_expansion}) is as
 follows. Let $(a_1b_1) \cdots (a_d b_d)$ be a product of $d$ transpositions $(a_i b_i) \in S_n$ in the symmetric
group  $S_n$ with $a_i < b_i$, $i=1, \dots, d$.
If $h \in  S_n$ is in the conjugacy class $ \cyc(\mu)\ss S_n$, we may view the successive steps in the product
\begin{gather}
(a_1b_1) \cdots (a_d b_d) h
\end{gather}
as a path in the Cayley graph generated by all transpositions, whose  {\it signature}
is the parti\-tion~$\lambda$ of~$d$, $|\lambda|=d$, whose parts~$\lambda_i$ consist
of the number of transpositions~$(a_i b_i)$ sharing the same second element. If we further
require that the ones with equal second elements be grouped together into consecutive subsequences in which these second elements
are constant,  with the consecutive subsequences  strictly increasing in their second elements, then the number~$\tilde{N}_\lambda$
of elements with signature~$\lambda$ is related to the number~$N_\lambda$ of such ordered sequences by
\begin{gather}
\tilde{N}_\lambda = {|\lambda|! \over \prod\limits_{i=1}^{\ell(\lambda)} \lambda_i !} N_\lambda.
\end{gather}
Denote the number of such paths from the conjugacy class of cycle type $\cyc(\mu)$ to the one of type $\cyc(\nu)$
having signature $\lambda$ as $\tilde{m}^\lambda_{\mu \nu}$, and the number of ordered sequences
of type $\lambda$ as  $m^\lambda_{\mu \nu}$. Thus
\begin{gather}
\tilde{m}^\lambda_{\mu \nu} =  {|\lambda|! \over \prod\limits_{i=1}^{\ell(\lambda)} \lambda_i !}  m^\lambda_{\mu \nu}.
\end{gather}

For all paths of signature $\lambda$ we  assign the weights
\begin{gather}
E_\lambda(q)  :=  \prod_{i=1}^{\ell(\lambda)}E_{\lambda_i}(q)
 =\prod_{i=1}^{\ell(\lambda)}{ q^{{1\over 2}\lambda_i(\lambda_i -1)} \over \prod\limits_{j=1}^{\lambda_i} (1-q^j)} , \qquad
H_\lambda(q)  :=  \prod_{i=1}^{\ell(\lambda)} H_{\lambda_i}(q)
 =\prod_{i=1}^{\ell(\lambda)}{1\over \prod\limits_{j=1}^{\lambda_i} (1-q^j)}
\end{gather}
to paths in the Cayley graph, and obtain the pair of corresponding combinatorial weighted Hurwitz numbers
\begin{gather}
F^d_{E(q)} (\mu, \nu)  := {1\over n!} \sum_{\lambda, \, |\lambda|=d} E_\lambda(q) m^\lambda_{\mu \nu},
\label{Fd_Eq}
\\
F^d_{H(p)} (\mu, \nu)  := {1\over n!} \sum_{\lambda, \, |\lambda|=d} H_\lambda(q) m^\lambda_{\mu \nu},
\label{Fd_Hq}
\end{gather}
that give the weighted enumeration of paths, using the weighting factors $E_\lambda(q)$ and $H_\lambda(q) $
respectively for all paths of signature~$\lambda$.

As shown in general in~\cite{GH2}, the enumerative geometrical and combinatorial def\/initions
of these quantum weighted Hurwitz numbers coincide:
\begin{gather}
H^d_{E(q)}(\mu, \nu)  = F^d_{E(q)}(\mu, \nu),  \qquad H^d_{H(q)}(\mu, \nu) = F^d_{H(q)}(\mu, \nu).
\label{Hd_equals_Fd}
\end{gather}
A similar result holds for weights generated by the function $E'(q,z)$, with corresponding quantum Hurwitz numbers
def\/ined by
\begin{gather}
H^d_{E'(q)}(\mu, \nu) := \sum_{k=0}^\infty  \sideset{}{'}\sum_{\mu^{(1)}, \dots, \mu^{(k)} \atop \sum\limits_{i=1}^k \ell^*(\mu^{(i)} )= d}
 W_{E'(q)}\big(\mu^{(1)}, \dots , \mu^{(k)}\big)  H\big(\mu^{(1)}, \dots , \mu^{(k)}, \mu, \nu\big) ,
\end{gather}
where the weights $W_{E'(q)} (\mu^{(1)}, \dots, \mu^{(k)})$ are given by
\begin{gather}
W_{E'(q)} \big(\mu^{(1)}, \dots, \mu^{(k)}\big)
:={1\over k!} \sum_{\sigma\in S_k}   {q^{(k) \ell^*(\mu^{(1)})} \cdots  q^{ \ell^*(\mu^{(k)})} \over
 \big(1- q^{\ell^*(\mu^{(\sigma(1))})} \big)  \cdots \big(1- q^{\ell^*(\mu^{(\sigma(1))})}\big) \cdots q^{\ell^*(\mu^{(\sigma(k))})}} \\
\hphantom{W_{E'(q)} \big(\mu^{(1)}, \dots, \mu^{(k)}\big) }{}
:={1\over k!} \sum_{\sigma\in S_k}   {1 \over
 \big(q^{-\ell^*(\mu^{(\sigma(1))})}-1  \big)  \cdots \big(q^{-\ell^*(\mu^{(\sigma(1))})} \cdots q^{-\ell^*(\mu^{(\sigma(k))})}-1 \big)}.
 \nonumber
 \end{gather}

Choosing $q$ as a positive real number, parametrizing it as
\begin{gather}
q= e^{- \beta \hbar \omega_0},  \qquad \beta={1\over kT}
\end{gather}
and identifying the energy levels $\epsilon_k$   as those for a Bose gas with  linear spectrum in the
integers, as for a $1-D$ harmonic oscillator
\begin{gather}
\epsilon_k := k \hbar \omega_0, \qquad k\in \Nb,
\end{gather}
it follows that if we assign the energy
\begin{gather}
\epsilon(\mu) := \epsilon_{\ell^*(\mu)} = \ell^*(\mu)\hbar \omega
\end{gather}
to each branch point with ramif\/ication prof\/ile of type~$\mu$, it contributes a factor
\begin{gather}
n(\mu) ={1 \over e^{\beta\epsilon(\mu)} -1}
\end{gather}
to the weighting distributions, the same as that for a bosonic gas.

\subsection[The multiparameter family  of $\tau$-functions $\tau^{Q({\bf q}; {\bf w};  {\bf p}, {\bf z})} (N,{\bf t}, {\bf s})$]{The multiparameter family  of $\boldsymbol{\tau}$-functions $\boldsymbol{\tau^{Q({\bf q}; {\bf w};  {\bf p}, {\bf z})} (N,{\bf t}, {\bf s})}$}
\label{multispecies_quantum_tau}

We now consider the multiparameter family of weight generating functions $Q({\bf q}, {\bf w}; {\bf p}, {\bf z}) $ obtained by taking the product of any number of the generating functions $E(q_i, z_i)$ and $H(p_j, w_j)$
for distinct sets  of generating function parameters ${\bf w} = (w_1, \dots , w_l)$, $ {\bf z} = (z_1, \dots , z_m)$,
and   quantum parameters $ {\bf q} = (q_1, \dots , q_l)$, ${\bf p} = (p_1, \dots , p_m)$
\begin{gather}
Q({\bf q}, {\bf w}; {\bf p}, {\bf z}) := \prod_{\alpha=1}^l E(q_\alpha, w_\alpha)  \prod_{\beta=1}^m H(p_\beta, z_\beta).
\end{gather}

   Following the approach developed in \cite{GH2}, we def\/ine an associated element of the
    center $\Zb(\Cb[S_n])$ of the group algebra $\Cb[S_n]$ by
   \begin{gather}
Q_n({\bf q}, {\bf w}; {\bf p}, {\bf z}, \JJ) := \prod_{a=1}^n Q({\bf q},  \JJ_a {\bf w}; {\bf p},  \JJ_a {\bf z}) ,
   \end{gather}
where $\JJ:= (\JJ_1, \dots, \JJ_n)$ are again the Jucys--Murphy elements~(\ref{jucys_murphy}) of~$\Cb[S_n]$.
 The element $Q_n({\bf q}, {\bf w}; {\bf p}, {\bf z}, \JJ)$  def\/ines
an endomorphism of $\Zb(\Cb[S_n])$ under multiplication, which is diagonal in the basis~$\{F_\lambda\}$
of $\Zb(\Cb[S_n])$ consisting of the orthogonal idempotents, corresponding to irreducible representations, labelled by partitions~$\lambda$ of~$n$
\begin{gather}
Q_n({\bf q}, {\bf w}; {\bf p}, {\bf z}, \JJ) F_\lambda = r_\lambda^{Q({\bf q},  {\bf w}; {\bf p},  {\bf z})} F_\lambda,
\label{central_qpzw_generator}
 \end{gather}
 where
 \begin{gather}
r_\lambda^{Q({\bf q},  {\bf w}; {\bf p},  {\bf z})} =  \prod _{\alpha=1}^l r_\lambda^{E(q_\alpha)} (w_\alpha)    \prod_{\beta=1}^m r_\lambda^{H(p_\beta)} (z_\beta).
\end{gather}

More generally, def\/ining
 \begin{gather}
r_\lambda^{Q({\bf q},  {\bf w}; {\bf p},  {\bf z})} (N) =  \prod _{\alpha=1}^l r_\lambda^{E(q_\alpha, w_\alpha)} (N)    \prod_{\beta=1}^m
 r_\lambda^{H(p_\beta, , z_\beta)} (N),
\end{gather}
where
\begin{gather}
r_\lambda^{E(q_\alpha, w_\alpha)} (N)   := \prod_{(ij) \in \lambda} E(q_\alpha, (N+ j-i)w_\alpha), \\
r_\lambda^{H(p_\beta, z_\beta)}(N)    := \prod_{(ij) \in \lambda} H(p_\beta, (N+ j-i)z_\beta),
\end{gather}
we have
\begin{gather}
r_\lambda^{Q({\bf q},  {\bf z}; {\bf p},  {\bf w})}  =  r_\lambda^{Q({\bf q},  {\bf z}; {\bf p},  {\bf w})} (0) .
\end{gather}
The double Schur function series
\begin{gather}
\tau^{Q({\bf q}, {\bf z}; {\bf p},  {\bf w})}(N, {\bf t}, {\bf s}) := \sum_\lambda  r_\lambda^{Q({\bf q},  {\bf z}; {\bf p},  {\bf w})} (N)
  S_\lambda({\bf t}) S_\lambda({\bf s})
\label{tau_qz_pw}
\end{gather}
then def\/ines a family of $2D$ Toda $\tau$-functions of hypergeometric type.

\subsection{Multispecies geometric quantum Hurwitz numbers}
\label{multispecies_quantum_geometric_hurwitz}

We now consider coverings in which the branch points are divided, as above,  into two dif\/ferent classes,
$\big\{\mu^{(\alpha, u_\alpha)}\big\}_{\alpha=1, \dots, l; \,  u_\alpha = 1 ,\dots , k_\alpha}$ and $\big\{\nu^{(\beta, v_\beta)}\big\}_{\beta=1, \dots, m; \, v_\beta= 1, \dots , \tilde{k}_\beta}$,
corresponding to weight ge\-ne\-rating functions of type $E(q_\alpha)$ and $H(p_\beta)$ respectively, each of which is
further divided into~$l$ and~$m$ distinct species (or ``colours''), of which there are~$\big\{k^\alpha\big\}$ and
$\big\{\tilde{k}^\beta\big\}$ branch points of types~$E$ and~$H$ and colours~$\alpha$ and~$\beta$ respectively. The weighted number of such coverings
 with specif\/ied total colengths    ${\bf d} =(d_1, \dots , d_l)$,  $\tilde{\bf d} =(d_1, \dots , \tilde{d}_m)$,  $d_\alpha, \tilde{d}_\beta \in \Nb$
 \begin{gather}
 d_\alpha = \sum_{u_\alpha=1}^{k_\alpha} \ell^*\big(\mu^{(\alpha, u_\alpha)}\big), \qquad  \tilde{d}_\beta = \sum_{v_\beta=1}^{\tilde{k}_\beta} \ell^*\big(\nu^{(\beta v_\beta)}\big)
 \end{gather}
  for each  class and colour is  the multispecies quantum Hurwitz number
\begin{gather}
 H^{({\bf d}, \tilde{\bf d})}_{({\bf q}, {\bf p})}(\mu, \nu)  :=
  \sum_{\{k_\alpha=1; \, \tilde{k}_\beta=1\} \atop \alpha=1, \dots, l; \, \beta=1, \dots, m}^\infty
 \sum_{\big\{\mu^{(\alpha, u_\alpha)};\,  \nu^{(\beta, v_\beta)}\big\}\atop \sum\limits_{u_\alpha=1}^{k_\alpha}\ell^*\big(\mu^{(\alpha, u_\alpha)} \big)= d_\alpha,
  \, \sum\limits_{v_\beta=1}^{\tilde{k}_\beta}\ell^*\big(\nu^{(\beta, v_\beta)} \big)= \tilde{d}_\beta }\nonumber\\
\hphantom{H^{({\bf d}, \tilde{\bf d})}_{({\bf q}, {\bf p})}(\mu, \nu)  :=}{}  \times
  \prod_{\alpha=1}^l W_{E(q_\alpha)}\big(\mu^{(\alpha,1)}, \dots , \mu^{(\alpha, k_\alpha)}\big)
\nonumber\\
\hphantom{H^{({\bf d}, \tilde{\bf d})}_{({\bf q}, {\bf p})}(\mu, \nu)  :=}{}\times
 \prod_{\beta=1}^m W_{H(p_\beta)}\big(\nu^{(\beta,1)}, \dots , \nu^{(\beta, \tilde{k}_\beta)}\big)
  H\big(\big\{\mu^{(\alpha, u_\alpha) };  \nu^{(\beta,  v_ \beta)}\big\}, \mu, \nu \big).
\end{gather}

Substituting the Frobenius--Schur formula (\ref{Frobenius_Schur_Hurwitz}) and the
Frobenius character formula~(\ref{Frobenius_character})
into (\ref{tau_qz_pw}), it follows that
\begin{gather}
\tau_{Q({\bf q}, {\bf z}; {\bf p},  {\bf w})}( {\bf t}, {\bf s}) := \tau_{Q({\bf q}, {\bf z}; {\bf p},  {\bf w})}(0,  {\bf t}, {\bf s})
\end{gather}
is the generating function for  $H^{({\bf c}, {\bf d})}_{({\bf q}, {\bf p})}(\mu, \nu)$.
Using multi-index notion to denote
\begin{gather}
\prod_{\alpha=1}^l w_\alpha^{d_\alpha} \prod_{\beta=1}^m z_\beta^{\tilde{d}_\beta} =:  {\bf w}^{\bf d} {\bf z}^{\tilde{\bf d}},
  \end{gather}
  we have
\begin{Theorem}
\begin{gather}
\tau^{Q({\bf q}, {\bf w}; {\bf p},  {\bf z})}( {\bf t}, {\bf s})  :=\sum_{{\bf d} \in \Nb}  {\bf w}^{\bf d} \sum_{\tilde{\bf d} \in \Nb }
{\bf z}^{\tilde{\bf d}}\sum_{\mu, \nu}
 H^{({\bf d}, \tilde{\bf d})}_{({\bf q}, {\bf p})}(\mu, \nu)p_\mu({\bf t}) p_\nu({\bf s}).
 \label{quantum_multispecies_geometric_tau}
 \end{gather}
\end{Theorem}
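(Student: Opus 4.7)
The plan is to recognize that the quantum weight generating function $Q({\bf q}, {\bf w}; {\bf p}, {\bf z})$ is simply the specialization of the general multispecies generating function $G^{(l,m)}({\bf w}, {\bf z})$ obtained by setting $G^\alpha = E(q_\alpha)$ for $\alpha = 1, \dots, l$ and $\tilde{G}^\beta = H(p_\beta)$ for $\beta = 1, \dots, m$. Under this specialization the parameter sequences become ${\bf c}^\alpha = \big(1, q_\alpha, q_\alpha^2, \dots\big)$ and $\tilde{\bf c}^\beta = \big(1, p_\beta, p_\beta^2, \dots\big)$. The content-product eigenvalue $r_\lambda^{Q({\bf q}, {\bf w}; {\bf p}, {\bf z})}(N)$ then factors as $\prod_\alpha r_\lambda^{E(q_\alpha, w_\alpha)}(N) \prod_\beta r_\lambda^{H(p_\beta, z_\beta)}(N)$, which is exactly $r_\lambda^{G^{(l,m)}({\bf w}, {\bf z})}(N)$ for this choice of component generating functions. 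Hence the two $\tau$-functions agree identically as formal series in ${\bf t}, {\bf s}$.

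Given this identification, the statement follows in two further steps. First, I would apply the preceding multispecies theorem (equation (\ref{multispecies_geometric_tau})) to this $\tau$-function to obtain its expansion in the basis $\{p_\mu({\bf t}) p_\nu({\bf s})\}$ with coefficients the multispecies geometric Hurwitz numbers $H^{({\bf d}, \tilde{\bf d})}_{G^{(l,m)}}(\mu, \nu)$, computed using the multiplicative weight $W_{G^{(l,m)}}$. Second, I would verify that for this choice of component functions the weight factors as $W_{G^{(l,m)}} = \prod_\alpha W_{E(q_\alpha)} \prod_\beta W_{H(p_\beta)}$, so that the sum defining $H^{({\bf d}, \tilde{\bf d})}_{G^{(l,m)}}(\mu, \nu)$ reduces precisely to the one defining $H^{({\bf d}, \tilde{\bf d})}_{({\bf q}, {\bf p})}(\mu, \nu)$ in Section~\ref{multispecies_quantum_geometric_hurwitz}.

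The main work is the verification in the last step: checking that the monomial-symmetric-function evaluations $m_{\lambda^{(\alpha)}}({\bf c}^\alpha)$ and $m_{\tilde{\lambda}^{(\beta)}}(\tilde{\bf c}^\beta)$ at the geometric sequences reproduce the explicit closed-form quantum weights $W_{E(q_\alpha)}$ and $W_{H(p_\beta)}$ recalled in Section~\ref{quantum_hurwitz}. This specialization argument is already established in the single-species case in \cite{GH2} and summarized in the discussion around (\ref{Hd_equals_Fd}); multiplicativity of the content-product eigenvalues, and consequently of $W_{G^{(l,m)}}$ over species, then propagates it to the multispecies setting with no additional ingredient. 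Once in place, the theorem is a direct corollary of the general multispecies result, with the quantum case amounting to bookkeeping rather than fresh analysis.
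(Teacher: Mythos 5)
Your proposal is correct and matches the paper's route: the paper likewise treats $Q({\bf q},{\bf w};{\bf p},{\bf z})$ as a product of the single-species generating functions $E(q_\alpha,w_\alpha)$, $H(p_\beta,z_\beta)$, so that the content-product coefficients $r_\lambda^{Q}$ factor over species, and then obtains (\ref{quantum_multispecies_geometric_tau}) by the same substitution of the Frobenius--Schur formula (\ref{Frobenius_Schur_Hurwitz}) and the Frobenius character formula (\ref{Frobenius_character}) into the Schur expansion (\ref{tau_qz_pw}) that proves the general multispecies theorem (\ref{multispecies_geometric_tau}). Your only added step, checking that $m_{\lambda^{(\alpha)}}$ and the dual-class weights evaluated at the geometric sequences $(1,q_\alpha,q_\alpha^2,\dots)$, $(1,p_\beta,p_\beta^2,\dots)$ reproduce $W_{E(q_\alpha)}$ and $W_{H(p_\beta)}$, is exactly the single-species specialization already recorded in Section~\ref{quantum_hurwitz} and \cite{GH2}, so the argument is the same in substance.
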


\subsection{Multispecies combinatorial quantum Hurwitz numbers}
\label{multispecies_quantum_combinatorial_hurwitz}

Another basis for $\Zb(\Cb[S_n])$  consists of the cycle sums
\begin{gather}
C_\mu := \sum_{h\in \cyc(\mu)} h,
\end{gather}
 where $\cyc(\mu)$ denotes the conjugacy class  of elements $h\in \cyc(\mu)$ with cycle lengths equal
to the parts of $\mu$.
 The two  are related by
\begin{gather}
F_\lambda = h_\lambda^{-1} \sum_{\mu, \, |\mu| = |\lambda|} \chi_\lambda(\mu) C_\mu,
\end{gather}
where $\chi_{\lambda}(\mu)$ denotes the irreducible character of the irreducible
representation of type~$\lambda$ eva\-luated on the conjugacy class~$\cyc(\mu)$.
Under the characteristic map, this   is equivalent to the Frobenius character formula~(\ref{Frobenius_character}).

Let $D_n$ denote the number of partitions of~$n$ and ${\bf F}^{(n, d_\alpha)}_{E(q_\alpha)}$, ${\bf F}^{(n, \tilde{d}_\beta)}_{H(p_\beta)}$   the $D_n \times D_n$ matrices whose  elements are $\big(F^{d_\alpha}_{E(q_\alpha)}(\mu, \nu)\big)_{|\mu|=|\nu|= n} $ and  $\big(F^{\tilde{d}_\beta}_{H(p_\beta)}(\mu, \nu)\big)_{|\mu|=|\nu|= n}$, respectively, for $\alpha=1, \dots, l$, $\beta=1, \dots, m$ as def\/ined in~(\ref{Fd_Eq}), (\ref{Fd_Hq}).  Since these  represent central elements of the group algebra~$\Cb[S_n]$, they commute amongst themselves. Def\/ining the matrix product
\begin{gather}
{\bf F}^{({\bf d}, \tilde{\bf d})}_{({\bf q}, {\bf p})} = \prod_{\alpha=1}^l {\bf F}^{(n, d_\alpha)}_{E(q_i)}   \prod_{\beta=1}^m {\bf F}^{(n, \tilde{d}_\beta)}_{H(p_\beta)} ,
\end{gather}
its matrix elements, denoted  $F^{({\bf c}, {\bf d})}_{({\bf q}, {\bf p})}  (\mu, \nu)$,  may be interpreted as the weighted
number of
\begin{gather}
d := \sum_{\alpha=1}^\alpha d_\alpha+ \sum_{\beta=1}^m \tilde{d}_\beta
\end{gather}
step paths in the Cayley graph from the conjugacy class of cycle  type $\mu$ to the one of type  $\nu$,
where~all paths are divided into equivalence classes, according to their {\em multisignatures} \linebreak
$\big\{\lambda^{(\alpha)}, \tilde{\lambda}^{(\beta)}\big\}_{\substack{\alpha =1, \dots, l \\ \beta =1, \dots , m}}$.

These consist  of a partition of the $d$  steps into $l +m$ parts, each of which is a subsequence assigned a
``colour'' and a ``class'' with $l$ of them of the f\/irst class and $m$ of  the second.
The number of partitions of f\/irst class with colour~$\alpha$ is~$d_\alpha$  while the number of
second class with colour~$\beta$ is~$\tilde{d}_\beta$. The partitions~$\lambda^{(\alpha)}$ of  weights~$d_\alpha$
are def\/ined to have parts~$\big\{ \lambda^{(\alpha)}_{u_\alpha}\big\}_{u_\alpha=1, \dots, k_\alpha}$ equal to the number of  a transposition appears within that subsequence, having the same second element, and similarly for
$\big\{\tilde{\lambda}^{(\beta)}_{v_\beta}\big\}_{v_\beta =1, \dots , \tilde{k}_\beta}$
with
\begin{gather}
k_\alpha= \ell\big( \lambda^{(\alpha)}\big), \qquad\tilde{k}_\beta= \ell\big( \tilde{\lambda}^{(\beta)}\big)
\end{gather}
the number of such parts.

For each such subpartition, the weight assigned is  the product of the weights for each subsegment
\begin{gather}
\prod_{\alpha=1}^l E_{\lambda^{(\alpha)}}(q_\alpha, w_\alpha)
\prod_{\beta=1}^m H_{\tilde{(\lambda)}^\beta}(p_\beta, z_\beta)
\end{gather}
and $F^{({\bf c}, {\bf d})}_{({\bf q}, {\bf p})}  (\mu, \nu)$  is the sum of these, each  multiplied by the number
of elements of the equivalence class of paths with the given multisignature
\begin{gather}
F^{({\bf c}, {\bf d})}_{({\bf q}, {\bf p})}  (\mu, \nu) = (n!)^{-l-m} \sum_{\mu^{(2)}, \dots ,  \mu^{(m+l)} } \prod_{\alpha=1}^l
 \left(\sum_{\{\lambda^{(\alpha)}\}} E_{\lambda^{(\alpha)}}(q_\alpha, w_\alpha) m^{\lambda^{(\alpha)}}_{\mu_\alpha \mu_{\alpha+1}} \right)
 \nonumber\\
 \hphantom{F^{({\bf c}, {\bf d})}_{({\bf q}, {\bf p})}  (\mu, \nu) =}{}\times
\prod_{\beta=1}^m \left(\sum_{\{\tilde{\lambda}^{(\beta)}\}}
H_{\tilde{\lambda}^{(\beta)}}(p_\beta, z_\beta) m^{\lambda^{(\beta)}}_{\mu_{\beta+l} , \mu_{\beta +l +1}} \right),
\end{gather}
where $(\mu, \nu):= \big(\mu^{(1)}, \mu^{(l+m+1)}\big)$.

The multispecies generalization of (\ref{Hd_equals_Fd}) is
equality of the geometric and combinatorial Hurwitz numbers:
\begin{Theorem}
\begin{gather}
F^{({\bf c}, {\bf d})}_{({\bf q}, {\bf p})}  (\mu, \nu) = H^{({\bf c}, {\bf d})}_{({\bf q}, {\bf p})}  (\mu, \nu).
\end{gather}
\end{Theorem}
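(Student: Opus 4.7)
The plan is to recognize this statement as a specialization of the general multispecies equality already proved in the preceding section. The quantum weight generating function factors as
$Q({\bf q},{\bf w};{\bf p},{\bf z}) = \prod_{\alpha=1}^l E(q_\alpha,w_\alpha)\prod_{\beta=1}^m H(p_\beta,z_\beta)$,
and each factor has an infinite product form of the required type: $E(q_\alpha,w)=\prod_{i\geq 0}(1+q_\alpha^i w)$ corresponds to parameter sequence ${\bf c}^\alpha=(1,q_\alpha,q_\alpha^2,\dots)$ of type $G^\alpha$, while $H(p_\beta,w)=\prod_{i\geq 0}(1-p_\beta^i w)^{-1}$ corresponds to $\tilde{\bf c}^\beta=(1,p_\beta,p_\beta^2,\dots)$ of type $\tilde G^\beta$. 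Consequently the central element $Q_n({\bf q},{\bf w};{\bf p},{\bf z},\JJ)$ is identical to the element $G_n^{(l,m)}({\bf w},{\bf z},\JJ)$ of (\ref{Glm_wzJJ}) under these identifications, and the content-product eigenvalue $r_\lambda^{Q({\bf q},{\bf w};{\bf p},{\bf z})}$ agrees with $r_\lambda^{G^{(l,m)}({\bf w},{\bf z})}$. Thus the result reduces to the general multispecies theorem of Section \ref{multispecies_combinatorial_hurwitz}; one need only verify that the quantum geometric weights $W_{E(q_\alpha)}$ and $W_{H(p_\beta)}$ coincide with $m_{\lambda^{(\alpha)}}({\bf c}^\alpha)$ and $f_{\tilde\lambda^{(\beta)}}(\tilde{\bf c}^\beta)$ at the above specializations, which is a direct comparison of the explicit monomial-symmetric and forgotten-symmetric function formulas (\ref{WG_weight}) and (\ref{dual_WG_weight}) with the single-species quantum expressions.

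For a self-contained argument that parallels the proof in Section \ref{multispecies_combinatorial_hurwitz}, I would apply $Q_n({\bf q},{\bf w};{\bf p},{\bf z},\JJ)$ to a cycle sum $C_\mu$ in two different ways. First, since the Jucys--Murphy elements commute, the factors $E_n(q_\alpha,w_\alpha,\JJ)$ and $H_n(p_\beta,z_\beta,\JJ)$ commute in $\Zb(\Cb[S_n])$, and the single-species identities underlying (\ref{Fd_Eq}) and (\ref{Fd_Hq}) give
$E_n(q_\alpha,w_\alpha,\JJ)\,C_\sigma=\sum_\nu\sum_{d_\alpha}w_\alpha^{d_\alpha}F^{d_\alpha}_{E(q_\alpha)}(\sigma,\nu)\,C_\nu$
and the analogous formula for $H_n(p_\beta,z_\beta,\JJ)$. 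Iterating these identities through the product (in any chosen order, by commutativity), the coefficient of ${\bf w}^{\bf d}{\bf z}^{\tilde{\bf d}}\,C_\nu$ in $Q_n({\bf q},{\bf w};{\bf p},{\bf z},\JJ)\,C_\mu$ is precisely the matrix element $F^{({\bf c},{\bf d})}_{({\bf q},{\bf p})}(\mu,\nu)$ of the product matrix defined in Section \ref{multispecies_quantum_combinatorial_hurwitz}.

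Second, I would diagonalize $Q_n$ in the idempotent basis. Using the change-of-basis relation $C_\mu=z_\mu^{-1}\sum_\lambda \chi_\lambda(\mu)F_\lambda$ (the inverse of the formula preceding (\ref{Frobenius_character})), together with the eigenvalue equation (\ref{central_qpzw_generator}) and the Frobenius--Schur formula (\ref{Frobenius_Schur_Hurwitz}), the coefficient of ${\bf w}^{\bf d}{\bf z}^{\tilde{\bf d}}\,C_\nu$ in $Q_n({\bf q},{\bf w};{\bf p},{\bf z},\JJ)\,C_\mu$ is equal to the geometric multispecies quantum Hurwitz number $H^{({\bf c},{\bf d})}_{({\bf q},{\bf p})}(\mu,\nu)$. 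Equating the two expressions yields the claimed identity.

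The only non-routine aspect is the bookkeeping involved in matching the quantum weights $W_{E(q_\alpha)},W_{H(p_\beta)}$ with their symmetric-function avatars at the geometric series of parameter values, and in ordering the products over the $l+m$ species so that the factorization of the central element matches the factorization of the product matrix ${\bf F}^{({\bf d},\tilde{\bf d})}_{({\bf q},{\bf p})}$; both of these are inherited from the single-species case and from the commutativity of the constituent factors in $\Zb(\Cb[S_n])$, so no genuine new obstacle arises.
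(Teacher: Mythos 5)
Your proposal is correct and follows essentially the same route as the paper: apply the central element $Q_n({\bf q},{\bf w};{\bf p},{\bf z},\JJ)$ to the cycle sums $C_\mu$ to extract the combinatorial numbers $F^{({\bf c},{\bf d})}_{({\bf q},{\bf p})}(\mu,\nu)$, then compare with the geometric expansion (\ref{quantum_multispecies_geometric_tau}) obtained from the eigenvalue equation (\ref{central_qpzw_generator}), the Frobenius--Schur formula and character orthogonality, and equate coefficients. Your opening observation that the quantum case is just the specialization $c^\alpha_i=q_\alpha^{i-1}$, $\tilde c^\beta_i=p_\beta^{i-1}$ of the general multispecies theorem is also consistent with the paper, which proves that general theorem by exactly the same two-expansion comparison.
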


\begin{proof}
Applying the central element~(\ref{central_qpzw_generator})
to the cycle sum $C_\mu$ gives
\begin{gather}
\hat{Q}({\bf q}, {\bf z}; {\bf p}, {\bf w}, \JJ)  C_\mu = \sum_{\nu, \, |\nu|=|\mu|}F^{({\bf c}, {\bf d})}_{({\bf q}, {\bf p})}  (\mu, \nu)   C_\nu
\end{gather}
and the orthogonality of group characters implies that
\begin{gather}
\tau^{Q({\bf q}, {\bf z}; {\bf p},  {\bf w})}({\bf t}, {\bf s})  :=\sum_{{\bf c}=(0, \dots,0);\, {\bf d} = (0, \dots, 0)}^ {(\infty, \dots, \infty)}
{\hskip -20 pt} {\bf z}^{\bf c} {\bf w}^{\bf d}\sum_{\mu, \nu}
 F^{({\bf c}, {\bf d})}_{({\bf q}, {\bf p})}(\mu, \nu)p_\mu({\bf t}) p_\nu({\bf s}).
 \end{gather}
 Comparing this with equation~(\ref{quantum_multispecies_geometric_tau}) proves the result.
\end{proof}

\begin{Remark}[multispecies Bose gases]
Returning to the interpretation of the quantum Hurwitz weighting distributions in terms
of Bose gases, if we choose each~$q_i$ to be a positive real number with $q_i<1$, and
parametrize it, as before,
\begin{gather}
q_i = e^{-\beta \hbar\omega_i}
\end{gather}
for some ground state energy $\hbar \omega_i$, and again choose a linear energy spectrum,
with energy assigned to the branchpoint $\mu$ of type $i$ with prof\/ile type $\mu$ to be
\begin{gather}
\epsilon^{(i)}(\mu) :=  \ell^*(\mu) \hbar \omega_i,
\end{gather}
we see that the resulting contributions to the weighting distributions distributions
 from each species of branch points of ramif\/ication type $\mu^{(j)}$ are given by
 \begin{gather}
n^{(i)}_{B}(\mu)=  {1\over e^{\beta \epsilon^{(i)}(\mu)}  -1},
 \end{gather}
 which are those of a multispecies  mixture of Bose gases.
\end{Remark}

\subsection*{Acknowledgments}

 This work is an extension of a~joint project~\cite{GH1, GH2} with  M.~Guay-Paquet,
in which the notion of inf\/inite parametric families of weighted Hurwitz numbers was f\/irst introduced,
combined with the notion of signed multispecies  Hurwitz numbers as introduced
in~\cite{HOr} with  A.Yu.~Orlov. The author would like to thank both these co-authors for helpful discussions.
Work supported by the Natural Sciences and Engineering Research Council of Canada (NSERC) and the Fonds de recherche du Qu\'ebec~-- Nature et technologies (FRQNT).

\pdfbookmark[1]{References}{ref}
\LastPageEnding


\begin{thebibliography}{99}
\footnotesize\itemsep=-1.5pt

\bibitem{BM}
Bouchard V., Mari{\~n}o M., Hurwitz numbers, matrix models and enumerative
  geometry, in From {H}odge Theory to Integrability and {TQFT}: {${\rm
  tt}^*$}-Geometry, \href{http://dx.doi.org/10.1090/pspum/078/2483754}{\textit{Proc. Sympos. Pure Math.}}, Vol.~78, Amer. Math.
  Soc., Providence, RI, 2008, 263--283, \href{http://arxiv.org/abs/0709.1458}{arXiv:0709.1458}.

\bibitem{DG}
Diaconis P., Greene C., Applications of Murphy's elements, {S}tanford
  {T}echnical {R}eport, no.~335, 1989.

\bibitem{EO3}
Eynard B., Invariants of spectral curves and intersection theory of moduli
  spaces of complex curves, \href{http://dx.doi.org/10.4310/CNTP.2014.v8.n3.a4}{\textit{Commun. Number Theory Phys.}} \textbf{8}
  (2014), 541--588, \href{http://arxiv.org/abs/1110.2949}{arXiv:1110.2949}.

\bibitem{EO1}
Eynard B., Orantin N., Invariants of algebraic curves and topological
  expansion, \href{http://dx.doi.org/10.4310/CNTP.2007.v1.n2.a4}{\textit{Commun. Number Theory Phys.}} \textbf{1} (2007), 347--452,
  \href{http://arxiv.org/abs/math-ph/0702045}{math-ph/0702045}.

\bibitem{EO2}
Eynard B., Orantin N., Topological recursion in enumerative geometry and random
  matrices, \href{http://dx.doi.org/10.1088/1751-8113/42/29/293001}{\textit{J.~Phys.~A: Math. Theor.}} \textbf{42} (2009), 293001,
  117~pages, \href{http://arxiv.org/abs/0811.3531}{arXiv:0811.3531}.

\bibitem{FK}
Faddeev L.D., Kashaev R.M., Quantum dilogarithm, \href{http://dx.doi.org/10.1142/S0217732394000447}{\textit{Modern Phys. Lett.~A}}
  \textbf{9} (1994), 427--434, \mbox{\href{http://arxiv.org/abs/hep-th/9310070}{hep-th/9310070}}.

\bibitem{Frob1}
Frobenius G., \"Uber die {C}haraktere der symmetrischen {G}ruppe,
  \textit{Sitzungsber. K\"onigl. Preuss. Akad. Wiss.}  (1900), 516--534.

\bibitem{Frob2}
Frobenius G., \"Uber die charakterische {E}inheiten der symmetrischen {G}ruppe,
  \textit{Sitzungsber. K\"onigl. Preuss. Akad. Wiss.}  (1903), 328--358.

\bibitem{GJ}
Goulden I.P., Jackson D.M., The {KP} hierarchy, branched covers, and
  triangulations, \href{http://dx.doi.org/10.1016/j.aim.2008.06.013}{\textit{Adv. Math.}} \textbf{219} (2008), 932--951,
  \href{http://arxiv.org/abs/0803.3980}{arXiv:0803.3980}.

\bibitem{GR}
Gross K.I., Richards D.S.P., Special functions of matrix argument.
  {I}.~{A}lgebraic induction, zonal polynomials, and hypergeometric functions,
  \href{http://dx.doi.org/10.2307/2000670}{\textit{Trans. Amer. Math. Soc.}} \textbf{301} (1987), 781--811.

\bibitem{GH1}
Guay-Paquet M., Harnad J., 2{D} {T}oda {$\tau$}-functions as combinatorial
  generating functions, \href{http://dx.doi.org/10.1007/s11005-015-0756-z}{\textit{Lett. Math. Phys.}} \textbf{105} (2015),
  827--852.

\bibitem{GH2}
Guay-Paquet M., Harnad J., Generating functions for weighted {H}urwitz numbers,
  \href{http://arxiv.org/abs/1408.6766}{arXiv:1408.6766}.

\bibitem{H1}
Harnad J., Quantum {H}urwitz numbers and {M}acdonald polynomials,
  \href{http://arxiv.org/abs/1504.03311}{arXiv:1504.03311}.

\bibitem{H2}
Harnad J., Weighted {H}urwitz numbers and hypergeometric $\tau$-functions: an
  overview, \href{http://arxiv.org/abs/1504.03408}{arXiv:1504.03408}.

\bibitem{HOr}
Harnad J., Orlov A.Yu., Hypergeometric {$\tau$}-functions, {H}urwitz numbers and
  enumeration of paths, \href{http://dx.doi.org/10.1007/s00220-015-2329-5}{\textit{Comm. Math. Phys.}} \textbf{338} (2015),
  267--284, \href{http://arxiv.org/abs/1407.7800}{arXiv:1407.7800}.

\bibitem{Hu1}
Hurwitz A., Ueber {R}iemann'sche {F}l\"asche mit gegebnise
  {V}erzweigungspunkten, \href{http://dx.doi.org/10.1007/BF01199469}{\textit{Math. Ann.}} \textbf{39} (1891), 1--60.

\bibitem{Hu2}
Hurwitz A., Ueber die {A}nzahl der {R}iemann'sche {F}l\"asche mit gegebnise
  {V}erzweigungspunkten, \href{http://dx.doi.org/10.1007/BF01448116}{\textit{Math. Ann.}} \textbf{55} (1902), 53--66.

\bibitem{Ju}
Jucys A.-A.A., Symmetric polynomials and the center of the symmetric group ring,
  \href{http://dx.doi.org/10.1016/0034-4877(74)90019-6}{\textit{Rep. Math. Phys.}} \textbf{5} (1974), 107--112.

\bibitem{LZ}
Lando S.K., Zvonkin A.K., Graphs on surfaces and their applications,
  \href{http://dx.doi.org/10.1007/978-3-540-38361-1}{\textit{Encyclopaedia of Mathematical Scien\-ces}}, Vol.~141, Springer-Verlag,
  Berlin, 2004.

\bibitem{Mac}
Macdonald I.G., Symmetric functions and {H}all polynomials, 2nd ed., \textit{Oxford
  Mathematical Monographs, Oxford Science Publications}, The Clarendon Press,
  Oxford University Press, New York, 1995.

\bibitem{Mu}
Murphy G.E., A new construction of {Y}oung's seminormal representation of the
  symmetric groups, \href{http://dx.doi.org/10.1016/0021-8693(81)90205-2}{\textit{J.~Algebra}} \textbf{69} (1981), 287--297.

\bibitem{NOr1}
Natanzon S.M., Orlov A.Yu., Hurwitz numbers and BKP hierarchy,
  \href{http://arxiv.org/abs/1407.8323}{arXiv:1407.8323}.

\bibitem{NOr2}
Natanzon S.M., Orlov A.Yu., BKP and projective Hurwitz numbers,
  \href{http://arxiv.org/abs/1501.01283}{arXiv:1501.01283}.


\bibitem{Ok}
Okounkov A., Toda equations for {H}urwitz numbers, \href{http://dx.doi.org/10.4310/MRL.2000.v7.n4.a10}{\textit{Math. Res. Lett.}}
  \textbf{7} (2000), 447--453, \href{http://arxiv.org/abs/math.AG/0004128}{math.AG/0004128}.

\bibitem{OrSc}
Orlov A.Yu., Shcherbin D.M., Hypergeometric solutions of soliton equations,
  \href{http://dx.doi.org/10.1023/A:1010402200567}{\textit{Theoret. and Math. Phys.}} \textbf{128} (2001), 906--926.

\bibitem{Pa}
Pandharipande R., The {T}oda equations and the {G}romov--{W}itten theory of the
  {R}iemann sphere, \href{http://dx.doi.org/10.1023/A:1026571018707}{\textit{Lett. Math. Phys.}} \textbf{53} (2000), 59--74,
  \href{http://arxiv.org/abs/math.AG/9912166}{math.AG/9912166}.

\bibitem{Sch}
Schur I., Neue {B}egr\"undung der {T}heorie der {G}ruppencharaktere,
  \textit{Sitzungsber. K\"onigl. Preuss. Akad. Wiss.}  (1904), 406--432.

\bibitem{Ta}
Takasaki K., Initial value problem for the {T}oda lattice hierarchy, in Group
  Representations and Systems of Dif\/ferential Equations ({T}okyo, 1982),
  \textit{Adv. Stud. Pure Math.}, Vol.~4, North-Holland, Amsterdam, 1984,
  139--163.

\bibitem{Takeb}
Takebe T., Representation theoretical meaning of the initial value problem for
  the {T}oda lattice hierarchy.~{I}, \href{http://dx.doi.org/10.1007/BF00414638}{\textit{Lett. Math. Phys.}} \textbf{21}
  (1991), 77--84.

\bibitem{UTa}
Ueno K., Takasaki K., Toda lattice hierarchy, in Group Representations and
  Systems of Dif\/ferential Equations ({T}okyo, 1982), \textit{Adv. Stud. Pure
  Math.}, Vol.~4, North-Holland, Amsterdam, 1984, 1--95.

\end{thebibliography}
\end{document}